\theoremstyle{definition}
       \newtheorem{defn}{Definition}[section]
       \newtheorem{expl}[defn]{Example}
       \newtheorem{assumption}[defn]{Assumption}
       \newtheorem{algorithm}[defn]{Algorithm}
       \newtheorem{conv}[defn]{Convention}
       \newtheorem{rem}[defn]{Remark}
\theoremstyle{plain}
       \newtheorem{thm}[defn]{Theorem}
       \newtheorem{lem}[defn]{Lemma}
       \newtheorem{prop}[defn]{Proposition}
       \newtheorem{cor}[defn]{Corollary}
\newcommand{\into}{\hookrightarrow}
\newcommand{\cf}{\mathbbm{1}}
\newcommand{\Land}{\bigwedge}
\newcommand{\ID}{\mathit{ID}}
\newcommand{\parr}{\rightharpoonup}
\newcommand{\Lang}{\mathcal{L}}	
\newcommand{\FLang}{\mathcal{F}}	
\newcommand{\Struct}{\mathcal{M}}
\newcommand{\pls}{\Lambda}
\newcommand{\gldiamond}[1]{\Diamond_{#1}}
\newcommand{\Sub}{\mathit{MSub}}
\newcommand{\Nat}{{\mathbb{N}}}
\newcommand{\Rat}{{\mathbb{Q}}}
\newcommand{\PDist}{D_\omega}
\newcommand{\Bag}{\mathcal{B}}
\newcommand{\Prop}{\mathsf{Prop}}
\newcommand{\size}{\operatorname{size}}
\newcommand{\maxsize}{\operatorname{maxsize}}
\newcommand{\powerset}{{\mathcal P}}
\newcommand{\Th}{\mathsf{Th}}
\newcommand{\modelsCA}{\models}
\newcommand{\modelsOS}{\models^1}
\newcommand{\modelsPL}{\models^0}
\newcommand{\Sem}[1]{{[\![#1]\!]}}
\newcommand{\PLSem}[1]{{\Sem{#1}^0}}
\newcommand{\conj}{\wedge}
\newcommand{\disj}{\vee}
\newcommand{\modimpl}{\to}
\newcommand{\modiff}{\leftrightarrow}
\newcommand{\impl}{\Rightarrow}
\newcommand{\rank}{\operatorname{rank}}
\newcommand{\mi}[1]{\mathit{#1}}
\newcommand{\Op}{{\mathit{op}}}
\newcommand{\id}{{\mathit{id}}}
\newcommand{\Id}{{\mathit{Id}}}
\newcommand{\Cat}{\mathsf}
\newcommand{\Set}{\Cat{Set}}
\newcommand{\argument}{\_\!\_}
\newcommand{\ModOp}{L}
\newcommand{\probset}{\mathfrak{P}}
\newcommand{\contrapow}{\mathcal{Q}}
\newcommand{\CK}{\mi{CK}}
\newcommand{\CKId}{$\CK\!+\!\ID$\xspace}
\newcommand{\MP}{\mathit{MP}}
\newcommand{\CKmp}{$\CK\!+\!\MP$\xspace}
\newcommand{\ThreeNb}{N_3}
\newcommand{\AgFunct}{\mathcal{A}}
\newcommand{\meet}{\wedge}
\newcommand{\Meet}{\bigwedge}
\newcommand{\CF}{\mathit{Cf}}
\newcommand{\CFid}{\mathit{Cf_{\ID}}}
\newcommand{\CFmp}{\mathit{Cf_{\MP}}}
\newcommand{\NP}{NP\xspace}
\newcommand{\PSPACE}{PSPACE\xspace}
\newcommand{\EXPT}{EXPTIME\xspace}
\newcounter{blubber}
\newenvironment{axarray}%
	       {\begin{array}{@{\hspace{5em}}p{5em}p{50em}}}{\end{array}}
\newenvironment{sparenumerate}
{\begin{list}
  {\arabic{blubber}.}
  {\usecounter{blubber}
   \setlength{\leftmargin}{0pt}
    \setlength{\parsep}{0pt}
    \setlength{\itemindent}{3ex}
    \setlength{\itemsep}{2pt}   
    \setlength{\listparindent}{3ex}
  }
}
{\end{list}}
\newenvironment{algenumerate}
{\begin{list}
  {\arabic{blubber}.}
  {\usecounter{blubber}
   \setlength{\leftmargin}{0ex}
    \setlength{\parsep}{0pt}
    \setlength{\itemindent}{3ex}
    \setlength{\itemsep}{2pt}   
    \setlength{\listparindent}{3ex}
  }
}
{\end{list}}
\newenvironment{sparitemize}
{\begin{list}{$\bullet$}{
		\setlength{\topsep}{0pt}
    \setlength{\leftmargin}{0pt}
    \setlength{\parsep}{0pt}
    \setlength{\itemindent}{4ex}
    \setlength{\itemsep}{0pt}
  }
}{\end{list}}
\newcommand{\eat}[1]{}
\begin{document}

\title{Shallow Models for Non-Iterative Modal Logics}
\author{Lutz Schr{\"o}der
\\
DFKI-Lab Bremen\\
and Department of Computer Science, Universit\"at Bremen
 \and 
Dirk Pattinson\\
Department of Computing\\ Imperial College London}
\maketitle             
\thispagestyle{empty}

\begin{abstract}
  \noindent The methods used to establish $\mi{PSPACE}$-bounds for
  modal logics can roughly be grouped into two classes: syntax driven
  methods establish that exhaustive proof search can be performed in
  polynomial space whereas semantic approaches directly construct
  shallow models. In this paper, we follow the latter approach and
  establish generic $\mi{PSPACE}$-bounds for a large and heterogeneous
  class of modal logics in a coalgebraic framework. In particular, no
  complete axiomatisation of the logic under scrutiny is needed. This
  does not only complement our earlier, syntactic, approach
  conceptually, but also covers a wide variety of new examples which
  are difficult to harness by purely syntactic means. Apart from
  re-proving known complexity bounds for a large variety of
  structurally different logics, we apply our method to obtain
  previously unknown $\mi{PSPACE}$-bounds for Elgesem's logic of
  agency and for graded modal logic over reflexive frames.
\end{abstract}


\section{Introduction}

\noindent Special purpose modal logics often combine expressivity and
decidability, usually in a low complexity class. In the absence of
fixed point operators, these logics are frequently decidable in
\PSPACE, i.e.\ not dramatically worse than propositional logic.
While lower \PSPACE bounds for modal logics can typically be
obtained directly from seminal results of Ladner~\cite{Ladner77} by
embedding a \PSPACE-hard logic such as $K$ or $KD$, upper bounds
are often non-trivial to establish. In particular \PSPACE upper
bounds for non-normal logics have recently received much attention:
\begin{sparitemize}
\item A \PSPACE upper bound for graded modal
  logic~\cite{Fine72} is obtained using a constraint set algorithm
  in~\cite{Tobies01}. This corrects a previously published incorrect
  algorithm and refutes a previous \EXPT hardness conjecture.
\item More recently, a $\PSPACE$ upper bound for Presburger modal
  logic (which contains graded modal logic and majority
  logic~\cite{PacuitSalame04}) has been established using a
  Ladner-type algorithm~\cite{DemriLugiez06} .
\item Using a variant of a shallow neighbourhood frame construction
  from~\cite{Vardi89}, a \PSPACE upper bound for coalition logic
  is established in~\cite{Pauly02}.
\item \PSPACE upper bounds for $\CK$ and related conditional
  logics~\cite{Chellas80} are obtained
  in~\cite{OlivettiEA07} by a detailed proof-theoretic
  analysis of a labelled sequent calculus.
\end{sparitemize}

The methods used to obtain these results can be broadly grouped into
two classes. Syntactic approaches presuppose a complete tableaux or
Gentzen system and establish that proof search can be performed in
polynomial space. Semantics-driven approaches, on the other hand,
directly construct shallow tree models. Both approaches are intimately
connected in the case of normal modal logics interpreted over Kripke
frames: counter models can usually be derived directly from search
trees~\cite{HalpernMoses92}. It should be noted that this method is
not immediately applicable in the non-normal case, where the structure
of models often goes far beyond mere graphs.

Using coalgebraic techniques, we have previously shown
\cite{SchroderPattinson06} that the syntactic approach uniformly
generalises to a large class of modal logics: starting from a
\emph{one-step complete} axiomatisation, we have applied
\emph{resolution closure} to obtain complete tableaux systems. Generic
\PSPACE-bounds follow if the ensuing rule set is
\emph{\PSPACE-tractable}. Here, we present a different, semantic, set
of methods to establish uniform \PSPACE bounds by directly
constructing shallow models for logics subject to the \emph{one-step
  polysize model property}, or a variant of the latter. In particular,
no axiomatisation of the logic itself is needed.

Apart from the fact that both methods use substantially different
techniques, they apply to different classes of examples. While it is
e.g.\ relatively easy to obtain a resolution closed rule set for
coalition logic \cite{Pauly02}, proving the one-step polysize model
property for (the coalgebraic semantics of) coalition logic is a
non-trivial task. On the other hand, small one-step models are
comparatively easy to construct for complex modal logics such as
probabilistic modal logic~\cite{FaginHalpern94} or Presburger modal
logic~\cite{DemriLugiez06} that are not straightforwardly amenable to
the syntactic approach via resolution closure, either because no
axiomatisation has been given or because the complexity of the
axiomatisation makes the resolution closure hard to harness.

Moreover, the present semantic approach to \PSPACE-bounds takes a
significant step to overcome an important barrier in the coalgebraic
treatment of modal logics.  Existing decidability and completeness
results~\cite{Pattinson03,CirsteaPattinson04,Schroder06,SchroderPattinson06}
are limited to \emph{rank-1 logics}, given by axioms whose modal
nesting depth is uniformly equal to one. While this already
encompasses a large class of examples (including all logics mentioned
so far),
the semantic model construction in the present paper applies to
\emph{non-iterative logics}~\cite{Lewis74}, i.e.\ logics axiomatised
without nested modalities (rank-1 logics additionally exclude
top-level propositional variables). Despite the seemingly minute
difference between the two classes of logics, this generalisation is
not only technically non-trivial but also substantially extends the
scope of the coalgebraic method. Besides the modal logic $T$, the
class of non-iterative logics includes e.g.\ all conditional logics
covered in~\cite{OlivettiEA07} (of which only 4 are rank-1), in
particular \CKmp~\cite{Chellas80}, as well as Elgesem's logic of
agency~\cite{Elgesem97,GovernatoriRotolo05} and the graded version
$Tn$ of $T$~\cite{Fine72}.

As in \cite{SchroderPattinson06}, we work in the framework of
\emph{coalgebraic modal logic}~\cite{Pattinson03} to obtain results
that are parametric in the underlying semantics of particular logics.
While normal modal logics are usually interpreted over Kripke frames,
non-normal logics see a large variety of different semantics, e.g.\
probabilistic systems~\cite{FaginHalpern94}, frames with ordered
branching~\cite{DemriLugiez06}, game frames~\cite{Pauly02}, or
conditional frames~\cite{Chellas80}. The coalgebraic treatment allows
us to encapsulate the semantics in the choice of a \emph{signature
  functor}, whose coalgebras then play the role of models, leading to
results that are unformly applicable to a large class of different
logics.

Since the class of \emph{all} coalgebras for a given signature functor
can always be completely axiomatised in rank~1~\cite{Schroder06}, in
analogy to the fact that the $K$-axioms are complete for the class of
\emph{all} Kripke frames, the standard coalgebraic approach is not
directly applicable to non-iterative logics. To overcome this
limitation, we introduce the new concept of interpreting modal logics
over coalgebras for \emph{copointed functors}, i.e.\ functors $T$
equipped with a natural transformation of type $T \to \Id$.

In this setting, our main technical tool is to cut back model
constructions from modal logics to the level of \emph{one-step logics}
which semantically do not involve state transitions, and then
amalgamate the corresponding \emph{one-step models} into shallow
models for the full modal logic, which ideally can be traversed in
polynomial space.  For this approach to work, the logic at hand needs
to support a small model property for its one-step fragment, the
\emph{one-step polysize model property (OSPMP)}. Our first main
theorem shows that the OSPMP guarantees decidability in polynomial
space.  Crucially, the OSPMP is much easier to establish than a
shallow model property for the logic itself.  To reprove e.g.\
Ladner's \PSPACE upper bound for $K$, one just observes that to
construct a set that intersects $n$ given sets, one needs at most $n$
elements.  For the conditional logics $\CK$, \CKId, and \CKmp, the
OSPMP is similarly easy to check.  For other logics, in particular
various logics of quantitative uncertainty, the OSPMP can be obtained
by sharpening known off-the-shelf results. As a new result, we
establish the OSPMP for Elgesem's logic of agency to obtain a
previously unknown \PSPACE upper bound.

As a by-product of our construction, we obtain \NP-bounds for
the bounded rank fragments of all logics with the OSPMP, generalizing
the corresponding result for the logics $K$ and $T$
from~\cite{Halpern95} to a large variety of structurally different
(non-iterative) logics.

While the OSPMP is usually easy to establish, a weaker property, the
\emph{one-step pointwise polysize model property (OSPPMP)}, can be
used in cases where the OSPMP fails, provided that the signature
functor supports a notion of pointwise smallness for overall
exponential-sized one-step models. This allows traversing
exponentially branching shallow models in polynomial space by dealing
with the successor structures of single states in a pointwise fashion.
Our second main result, which yields \PSPACE upper bounds for logics
with the OSPPMP, is applied to reprove the known \PSPACE bound for
Presburger modal logic~\cite{DemriLugiez06} and to derive a new
\PSPACE bound for Presburger modal logic over reflexive frames, and
hence for $Tn$~\cite{Fine72} (which was so far only known to be
decidable~\cite{FattorosiBarnabaDeCaro85}). The latter result extends
straightforwardly to a description logic with role hierarchies,
qualified number restrictions, and reflexive roles.

\section{Coalgebraic Modal Logic}\label{sec:prelim}

\noindent We recall the coalgebraic interpretation of modal logic and
extend it to non-iterative logics using copointed functors.

A \emph{modal signature} $\Lambda$ is a set of modal operators with
associated finite arity. The signature $\Lambda$ determines two
languages: firstly, the \emph{one-step logic} of $\Lambda$, whose
formulas $\psi,\dots$ (the \emph{one-step formulas}) over a set $V$ of
propositional variables are defined by the grammar
\begin{equation*}
\psi :: = \bot\ \mid \psi_1\conj\psi_2\ \mid\ \neg\psi\ \mid\ \ModOp
(\phi_1,\dots,\phi_n),
\end{equation*}
where $\ModOp\in\Lambda$ is $n$-ary and the $\phi_i$ are propositional
formulas over $V$; and secondly, the \emph{modal logic} of $\Lambda$, whose set
$\FLang(\Lambda)$ of \emph{$\Lambda$-formulas} $\psi,\dots$ is
defined by the grammar
\begin{equation*}
\psi :: =  \bot\ \mid\psi_1\conj\psi_2\ \mid\ \neg\psi\ \mid\ \ModOp
(\psi_1,\dots,\psi_n).
\end{equation*}
Thus, the modal logic of $\Lambda$ is distinguished from the one-step
logic in that it admits nested modalities. The boolean operations
$\disj$, $\modimpl$, $\modiff$, $\top$ are defined as usual. The
\emph{rank} $\rank(\phi)$ of $\phi\in\FLang(\Lambda)$ is the maximal
nesting depth of modalities in $\phi$ (note however that the notion of
rank-$1$ logic~\cite{Schroder06,SchroderPattinson06} is stricter than
suggested by this definition, as it excludes top-level propositional
variables in axioms; the latter are allowed only in non-iterative
logics). We denote by $\FLang_n(\pls)$ the set of formulas of rank at
most $n$; we refer to the languages $\FLang_n(\Lambda)$ as
\emph{bounded-rank fragments}.

We treat one-step logics as a technical tool in the study of modal
logics. However, one-step logics also appear as logics of independent
interest in the
literature~\cite{FaginHalpernMegiddo90,HalpernPucella02,HalpernPucella02b}.
One of the central ideas of coalgebraic modal logic is that properties
of the full modal logic, such as soundness, completeness, and
decidability, can be reduced to properties of the much simpler
one-step logic. This is also the spirit of the present work, whose
core is a construction of polynomially branching shallow models for
the modal logic assuming a small model property for the one-step
logic.

The semantics of both the one-step logic and the modal logic of
$\Lambda$ are parametrized coalgebraically by the choice of a set
functor. The standard setup of coalgebraic modal logic using all
coalgebras for a plain set functor covers only \emph{rank-1 logics},
i.e.\ logics axiomatised one-step formulas~\cite{Schroder06} (a
typical example is the $K$-axiom $\Box (a \modimpl b)\modimpl \Box
a\modimpl\Box b$). Here, we improve on this by considering the class
of coalgebras for a given \emph{copointed} set functor, which enables
us to cover the more general class of \emph{non-iterative logics},
axiomatised by arbitrary formulas without nested modalities (such as
the $T$-axiom $\Box a\modimpl a$). We follow a purely semantic
approach and hence do not formally consider axiomatisations in the
present work (where we do mention axioms, this is for solely
explanatory purposes). However, the extended scope of the new
framework and its relation to non-iterative modal logics (which can be
made precise in the same way as for plain functors and rank-1
logics~\cite{Schroder06}) will become clear in the examples.

In general, a copointed functor $(T,\epsilon)$ consists of a functor
$T: \Set \to \Set$, where $\Set$ is the category of sets, and a
natural transformation $\epsilon:T\to\Id$.
For our present purposes, a slightly restricted notion is more
convenient:
\begin{defn}\label{def:coalg}
A \emph{(restricted) copointed functor} $S$ with \emph{signature
functor} $S_0:\Set\to\Set$ is a subfunctor of $S_0\times\Id$ (where
$(S_0\times\Id)X=S_0X\times X$). We say that $S$ is \emph{trivially
copointed} if $S=S_0\times\Id$. An \emph{$S$-coalgebra}~$A=(X,\xi)$
consists of a set $X$ of \emph{states} and a \emph{transition} function
$\xi: X \to S_0X$ such that $(\xi(x),x)\in SX$ for all $x$.
\end{defn}

\begin{rem}
  The modal logic $\FLang(\Lambda)$ does not explicitly include
  propositional variables. These may be regarded as nullary modal
  operators in $\Lambda$; their semantics is then defined over
  coalgebras for $S_0\times\powerset(V)$, where $V$ is the set of
  variables (cf.\ also e.g.~\cite{Schroder06}). We omit discussion of
  propositional variables in the examples, even in cases like the
  modal logic of probability that become trivial in the absence of
  variables; our treatment extends straightforwardly to the case with
  variables in the manner just indicated.
\end{rem}\noindent
We view coalgebras as generalized transition systems: the transition
function maps a state to a structured set of successors and
observations, with the structure prescribed by the signature
functor. Thus, the latter encapsulates the branching type of the
underlying transition systems. Copointed functors additionally impose
local frame conditions that relate a state to the collection of its
successors.

\begin{assumption}\label{ass:injective}
We assume w.l.o.g.\ that $S_0$ preserves injective
maps~\cite{Barr93}, and even $S_0X \subseteq S_0Y$ in case $X \subseteq
Y$, and that $S$ is \emph{non-trivial}, i.e.\ $SX=\emptyset\Rightarrow
X=\emptyset$.
\end{assumption}
\noindent Generalising earlier work (e.g.~\cite{Jacobs00,Kurz01}),
\emph{coalgebraic modal logic} abstractly captures the interpretation
of modal operators as polyadic predicate
liftings~\cite{Pattinson03,Schroder05},
\begin{defn}\label{def:lifting}
An \emph{$n$-ary predicate lifting} ($n\in\Nat$) for 
$S_0$ is a natural transformation
\begin{displaymath}
\lambda:\contrapow^n\to \contrapow\circ S_0^\Op,
\end{displaymath}
where~$\contrapow$ denotes the contravariant powerset functor
$\Set^\Op\to\Set$ (i.e.\ $\contrapow X$ is the powerset $\powerset(X)$,
and $\contrapow f (A)=f^{-1}[A]$), and $\contrapow^n$ is defined by
$\contrapow^n X=(\contrapow X)^n$.
\end{defn}\noindent
A coalgebraic semantics for $\Lambda$ is formally defined as a
\emph{$\Lambda$-structure} $\Struct$ (\emph{over $S$}) consisting of a
copointed functor $S$ with signature functor $S_0$ and an assignment
of an $n$-ary predicate lifting $\Sem{\ModOp}$ for $S_0$ to every
$n$-ary modal operator $L\in\Lambda$. When $S$ is trivially copointed,
we will also call $\Struct$ a \emph{simple $\Lambda$-structure (over
  $S_0$)}.  \emph{We fix the notation $\Lambda$, $\Struct$, $S$, $S_0$
  throughout the paper}.  The semantics of the modal language
$\FLang(\Lambda)$ is then given in terms of a satisfaction relation
$\modelsCA_C$ between states $x$ of $S$-coalgebras $C=(X, \xi)$ and
$\FLang(\pls)$-formulas over $V$. The relation $\modelsCA_C$ is
defined inductively, with the usual clauses for boolean operators.
The clause for an $n$-ary modal operator $\ModOp$ is
\begin{equation*}
  x \modelsCA_C \ModOp (\phi_1,\dots,\phi_n)\; \Leftrightarrow\; \xi(x)
  \in \Sem{\ModOp}(\Sem{\phi_1}_C,\dots,\Sem{\phi_n}_C)
\end{equation*}
where $\Sem{\phi}_C=\{ x \in X \mid x \modelsCA_C \phi\}$.  We drop the
subscripts $C$ when clear from the context.  Our main interest is in the
(local) \emph{satisfiability problem} over $\Struct$:
\begin{defn}\label{def:lang}
An $\FLang(\pls)$-formula $\phi$ is \emph{satisfiable}
if there exist an $S$-coalgebra $C$ and a state $x$ in $C$ such that
$x\modelsCA_C\phi$. Dually, $\phi$ is \emph{valid} if
$x\modelsCA_C\phi$ for all $C,x$.
\end{defn}\noindent
Contrastingly, the semantics of the one-step logic is given in terms of
satisfaction relations $\modelsOS_{X,\tau}$ between elements $t\in
S_0X$ and one-step formulas over $V$, where $X$ is a set and $\tau$ is a
\emph{$\powerset(X)$-valuation} for $V$, i.e.\ a map
$\tau:V\to\powerset(X)$. The valuation $\tau$ canonically induces an
interpretation of $\PLSem{\phi}\tau\subseteq X$ of propositional
formulas $\phi$ over $V$. We write $X,\tau\modelsPL\phi$ if
$\PLSem{\phi}\tau=X$. The relation $\modelsOS_{X,\tau}$ is then defined
by the usual clauses for boolean operators, and
\begin{equation*}
t\modelsOS_{X,\tau}\ModOp (\phi_1,\dots,\phi_n) 
\;\Leftrightarrow\;
t\in\Sem{\ModOp}(\PLSem{\phi_1}\tau,\dots,\PLSem{\phi_n}\tau).
\end{equation*}
Note in particular that the semantics of the one-step logic does not
involve a notion of state transition.  
\begin{defn}\label{def:osm}
A \emph{one-step model} $(X,\tau,t,x)$ over $V$ consists of a set $X$, a
$\powerset(X)$-valuation $\tau$ for $V$, $t\in S_0X$, and $x\in X$ such
that $(t,x)\in SX$. The latter condition is vacuous if $S$ is trivially
copointed, in which case we omit the mention of $x$. For a one-step
formula $\psi$ over $V$, $(X,\tau,t,x)$ is a \emph{one-step model of
$\psi$} if $t\modelsOS_{X,\tau}\psi$. 
\end{defn}\noindent
\noindent We recall some basic notation:
\begin{defn}\label{def:propstuff}
We denote the set of propositional formulas over a set~$Z$, generated by
the basic connectives~$\neg$ and~$\conj$, by $\Prop(Z)$. We use
variables $\epsilon$ etc.\ to denote either nothing or $\neg$. Thus, a
\emph{literal} over~$Z$ is a formula of the form $\epsilon a$, with
$a\in Z$. A \emph{(conjunctive) clause} is a finite, possibly empty,
disjunction (conjunction) of literals.  We denote by $\Lambda(Z)$ the set
$\{\ModOp (a_1,\dots,a_n)\mid\ModOp\in\Lambda\textrm{
$n$-ary},a_1,\dots,a_n\in Z\}$.  
\end{defn}\noindent
In the above notation, the set of one-step formulas over $V$
is $\Prop(\Lambda(\Prop(V)))$.  The one-step logic may alternatively
be presented in terms of pairs of formulas separating out the lower
propositional layer:
\begin{defn}\label{def:onestepmodel}
A \emph{one-step pair} $(\phi,\psi)$ over $V$ consists of formulas
$\psi\in\Prop(\Lambda(V))$ and $\phi\in\Prop(V)$. A one-step model
$(X,\tau,t,x)$ is a \emph{one-step model of $(\phi,\psi)$} if
$X,\tau\modelsPL\phi$ and $t\modelsOS_{X,\tau}\psi$. 
\end{defn}\noindent
In analogy to the equivalence between axioms and one-step rules
described in~\cite{Schroder06}, one-step pairs and one-step formulas may
replace each other for purposes of satisfiability:
\begin{lem}\label{lem:pairs}
For every one-step pair $(\phi,\psi)$ over $V$ with $\phi$ satisfiable,
there exists a $\Prop(V)$-substitution $\sigma$ such that the one-step
formula $\psi\sigma$ is equivalent to $(\phi,\psi)$ in the sense that if
$t\modelsOS_{X,\tau}\psi\sigma$ then $t\modelsOS_{X,\sigma\tau}(\phi,\psi)$,
and if $t\modelsOS_{X,\tau}(\phi,\psi)$ then $t\modelsOS_{X,\tau}\psi\sigma$
(and $\sigma\tau=\tau$).  Here, $\sigma\tau$ denotes the
$\powerset(X)$-valuation taking $a$ to $\PLSem{\sigma(a)}\tau$.

Conversely, we have, for $\psi\in\Prop(\Lambda(\Prop(V)))$, an
equivalent one-step pair $(\phi,\psi_1)$ over $V\cup W$, where $\psi$
decomposes as $\psi\equiv\psi_1\sigma$, with
$\psi_1\in\Prop(\Lambda(W))$, $\sigma$ a $\Prop(V)$-substitution,
and $V\cap W=\emptyset$, and where $\phi$ is the conjunction of the
formulas $a\modiff\sigma(a)$, $a\in W$. Here, restricting valuations to
$V$ induces a bijection between one-step models of $(\phi,\psi_1)$ and
one-step models of $\psi$.
\end{lem}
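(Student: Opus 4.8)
The plan is to derive both halves from one \emph{substitution lemma} for the one-step semantics: for a $\Prop(V)$-substitution $\sigma$, a set $X$, a $\powerset(X)$-valuation $\tau$, and a propositional combination $\theta\in\Prop(\Lambda(U))$ of modalised atoms over some variable set $U$ on which $\sigma$ is defined, one has $t\modelsOS_{X,\tau}\theta\sigma$ iff $t\modelsOS_{X,\sigma\tau}\theta$, where $\sigma\tau$ is the valuation $a\mapsto\PLSem{\sigma(a)}\tau$. This is a routine induction on $\theta$: the boolean cases are immediate from the boolean clauses of $\modelsOS$, and for $\theta=\ModOp(a_1,\dots,a_n)$ both sides unfold, by the semantic clause for $\ModOp$, to $t\in\Sem{\ModOp}(\PLSem{\sigma(a_1)}\tau,\dots,\PLSem{\sigma(a_n)}\tau)$, using $(\sigma\tau)(a_i)=\PLSem{\sigma(a_i)}\tau$. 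The corresponding propositional fact $\PLSem{\phi\sigma}\tau=\PLSem{\phi}{(\sigma\tau)}$ (hence $X,\tau\modelsPL\phi\sigma$ iff $X,\sigma\tau\modelsPL\phi$) is used silently throughout.

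For the first statement the idea is to fold the constraint $\phi$ into $\sigma$. Since $\phi$ is satisfiable and uses only finitely many variables, fix a boolean assignment $v_0$ of these variables with $v_0\models\phi$, and define $\sigma(a)=\neg\phi\disj a$ if $v_0(a)=\True$, $\sigma(a)=\phi\conj a$ if $v_0(a)=\False$, and $\sigma(a)=a$ on variables not occurring in $\phi$. This $\sigma$ is a retraction onto the valuations satisfying $\phi$: a state-by-state check shows that under any $\tau$, at a state at which $\phi$ holds the local truth value of $\sigma(a)$ equals that of $a$, while at a state at which $\phi$ fails the local assignment induced by $\sigma\tau$ is exactly $v_0$. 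Hence (i) $X,\tau\modelsPL\phi$ implies $\sigma\tau=\tau$, and (ii) the local assignment induced by $\sigma\tau$ satisfies $\phi$ at every state, i.e.\ $\phi\sigma$ is a tautology, i.e.\ $X,\sigma\tau\modelsPL\phi$ for every $\tau$. The two implications follow: if $t\modelsOS_{X,\tau}\psi\sigma$ then the substitution lemma gives $t\modelsOS_{X,\sigma\tau}\psi$ and (ii) gives $X,\sigma\tau\modelsPL\phi$, so $t\modelsOS_{X,\sigma\tau}(\phi,\psi)$; conversely $t\modelsOS_{X,\tau}(\phi,\psi)$ gives $X,\tau\modelsPL\phi$, hence $\sigma\tau=\tau$ by (i), and then $t\modelsOS_{X,\tau}\psi$ plus the substitution lemma yield $t\modelsOS_{X,\tau}\psi\sigma$.

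For the converse, decompose $\psi\in\Prop(\Lambda(\Prop(V)))$: let $\chi^{(1)},\dots,\chi^{(r)}\in\Prop(V)$ enumerate the distinct formulas occurring as arguments of modal operators in $\psi$, take fresh variables $W=\{b_1,\dots,b_r\}$ with $V\cap W=\emptyset$, set $\sigma(b_j)=\chi^{(j)}$, and let $\psi_1\in\Prop(\Lambda(W))$ arise from $\psi$ by replacing each argument $\chi^{(j)}$ by $b_j$, so $\psi\equiv\psi_1\sigma$; put $\phi=\Meet_{a\in W}(a\modiff\sigma(a))$. Restriction of valuations to $V$ sends a one-step model $(X,\tau',t,x)$ to $(X,\tau'|_V,t,x)$, leaving $X$, $t$, $x$ and the condition $(t,x)\in SX$ untouched, and its candidate inverse extends a valuation $\tau$ on $V$ by $\tau'(b_j)=\PLSem{\chi^{(j)}}\tau$. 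The point is that $X,\tau'\modelsPL\phi$ says precisely that $\tau'(a)=\PLSem{\sigma(a)}{(\tau'|_V)}$ for all $a\in W$ (using $\sigma(a)\in\Prop(V)$), i.e.\ that $\tau'|_W$ is determined by $\tau'|_V$; this makes the two maps mutually inverse, and under this constraint $t\modelsOS_{X,\tau'}\psi_1$ iff $t\modelsOS_{X,\sigma(\tau'|_V)}\psi_1$ iff (substitution lemma) $t\modelsOS_{X,\tau'|_V}\psi_1\sigma=t\modelsOS_{X,\tau'|_V}\psi$, so the bijection carries one-step models of $(\phi,\psi_1)$ exactly onto one-step models of $\psi$.

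The only step that is not pure bookkeeping is the construction of the retraction substitution $\sigma$ in the first part: the if-then-else form with a satisfying assignment of $\phi$ as default value is what simultaneously forces $\sigma\tau$ to satisfy $\phi$ for every $\tau$ and to equal $\tau$ whenever $\tau$ already does. Everything else reduces to the two substitution lemmas and unwinding definitions.
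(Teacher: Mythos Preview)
Your proof is correct and essentially identical to the paper's: your retraction substitution $\sigma(a)=\neg\phi\lor a$ or $\phi\land a$ according to $v_0(a)$ is exactly the paper's $\sigma(a)=\phi\modimpl a$ or $a\land\phi$ according to the satisfying valuation $\kappa$, and your properties (i) and (ii) are precisely the tautologies $\phi\modimpl(a\modiff\sigma(a))$ and $\phi\sigma$ that the paper cites from~\cite{Schroder06}. You have simply spelled out the substitution lemma and the bijection for the second claim, which the paper leaves as ``straightforward'' and ``clear'' respectively.
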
\noindent
\begin{proof}

The second claim is clear. The first claim is proved as follows.  As
in~\cite{Schroder06}, let $\kappa$ be a satisfying truth valuation for
$\phi$ and put $\sigma(a)=a\conj\phi$ if $\kappa(a)=\bot$, and
$\sigma(a)=\phi\modimpl a$ otherwise. Then $\phi\sigma$ and the formulas
$\phi\modimpl(a\modiff\sigma(a))$, for $a\in V$, are
tautologies~\cite{Schroder06}. Both directions of the first claim now
follow straightforwardly. 
\end{proof}
\noindent The coalgebraic approach subsumes many interesting modal
logics, including e.g.\ graded and probabilistic modal logics and
coalition logic~\cite{SchroderPattinson06}. Below, we present the most
basic examples, the modal logics $K$ and $T$, as well as various
conditional logics and logics of quantitative uncertainty. The
treatment of Elgesem's modal logic of agency is deferred to
Sect.~\ref{sec:elgesem}.
\begin{expl}\label{expl:logics}
\begin{sparenumerate}
\item \emph{Modal logic $K$:} \label{expl:Kripke} Let
  $\Lambda=\{\Box\}$, with $\Box$ a unary modal operator. We define a
  simple $\Lambda$-structure over the covariant powerset functor
  $\powerset$ (i.e.\ $\powerset X$ is powerset, and $\powerset
  f(A)=f[A]$) by putting $\Sem{\Box}_X(A)=\{B\in \powerset X\mid
  B\subseteq A\}$. Naturality of $\Sem{\Box}$ is just the equivalence
  $f[B]\subseteq A \iff B\subseteq f^{-1}[A]$.

  $\powerset$-coalgebras are Kripke frames, and $\powerset$-models are
  Kripke models. The modal logic of $\Lambda$ is precisely the modal
  logic $K$, equipped with its standard Kripke semantics. Contrastingly,
  a one-step formula over $V$ is a propositional combination of atoms of
  the form $\Box\phi$, where $\phi\in\Prop(V)$. For $A\in\powerset X$,
  we have $A\modelsOS_{X,\tau}\Box\phi$ iff
  $A\subseteq\PLSem{\phi}\tau$. One easily checks that the one-step
  logic is \NP-complete, while the modal logic $K$ is
  \PSPACE-complete~\cite{Ladner77}.
\item \emph{Modal logic $T$:} The logic $T$ has the same syntax as
  $K$. Its coalgebraic semantics is a structure over the
  copointed functor $R$ with signature functor $\powerset$, given by 
  \begin{equation*}
    RX=\{(A,x)\in\powerset X\times X\mid x\in A\}.
  \end{equation*}
  Thus, $R$-coalgebras are reflexive Kripke frames. The interpretation
  of $\Box$ is defined as for $K$. (Axiomatically, $T$
  is determined by the non-iterative axiom $\Box a\modimpl a$.)  
\item\label{item:ck} \emph{Conditional logic $\CK$:} The signature of
  conditional logic has a single binary modal operator $\Rightarrow$,
  written in infix notation. Formulas $\phi\Rightarrow\psi$ are read
  as non-monotonic conditionals. The semantics of the conditional
  logic $\CK$~\cite{Chellas80} is given by a simple structure over the
  functor $\CF$ given by $\CF(X)=(\contrapow X\to \powerset X)$, with
  $\to$ denoting function space and $\contrapow$ contravariant
  powerset, cf.\ Definition~\ref{def:lifting}. $\CF$-coalgebras are
  \emph{conditional frames}~\cite{Chellas80}. The operator
  $\Rightarrow$ is interpreted over~$\CF$ by
  \begin{equation*}
    \Sem{\Rightarrow}_X(A,B)=
    \{f:\contrapow X\to\powerset X\mid f(A)\subseteq B\}.
  \end{equation*}
\item \emph{Conditional logic \CKId:} The conditional logic
  \CKId~\cite{Chellas80} extends $\CK$ with the rank-1 axiom
  $a\Rightarrow a$, referred to as $\ID$. The semantics of \CKId is
  modelled by restricting the structure for $\CK$ to the subfunctor
  $\CFid$ of $\CF$ defined by
   \begin{equation*}
     \CFid(X)=\{f\in \CF(X)\mid \forall A\in\contrapow X.\,f(A)\subseteq A\}.
   \end{equation*}
\item \emph{Conditional Logic \CKmp:} The logic \CKmp~\cite{Chellas80}
  extends  $\CK$ with the non-iterative axiom
  \begin{equation*}
    (\MP)\quad (a\Rightarrow b)\modimpl (a \modimpl b).
  \end{equation*}
  (This axiom is undesirable in default logics, but reasonable in
  relevance logics.) Semantically, this amounts to passing from the
  functor $\CF$ to the copointed functor $\CFmp$ with signature functor
  $\CF$, defined by
  \begin{equation*}
    \CFmp(X) = \{(f,x)\in \CF(X)\times X\mid
    \forall A\subseteq X.\,x\in A\impl x\in f(A)\}.
  \end{equation*}
\item \label{item:prob}\emph{Modal logics of quantitative
    uncertainty:} The modal signature of \emph{likelihood} has $n$-ary
  modal operators $\sum_{i=1}^na_il(\argument)\ge b$ for
  $a_1,\dots,a_1,b\in \Rat$.  The terms $l(\phi)$ are called
  \emph{likelihoods}. The interpretation of likelihoods varies. E.g.\
  the semantics of the \emph{modal logic of
    probability}~\cite{FaginHalpern94} is modelled coalgebraically by
  a structure over the (finite) distribution functor $\PDist$, where
  $\PDist X$ is the set of finitely supported probability
  distributions on $X$, and $\PDist f$ acts as image measure
  formation. Coalgebras for $\PDist$ are probabilistic transition
  systems (i.e.\ Markov chains).  Likelihoods are interpreted as
  probabilities; i.e.
  \begin{equation*}\textstyle
    \Sem{\sum_{i=1}^na_i\cdot l(\argument)\ge b}_X(A_1,\dots,A_n)=
    \{P\in\PDist X\mid
    \textstyle\sum_{i=1}^n a_iP(A_i)\ge b\}.
  \end{equation*}
  Alternatively, likelihoods may be interpreted as \emph{upper
    probabilities}~\cite{HalpernPucella02}, i.e.\ the functor $\PDist$
  is replaced by $\powerset\circ\PDist$, and in the above definition,
  $P(A_i)$ is replaced by $\probset^*(A_i)$, where for
  $\probset\in\powerset\PDist X$, $\probset^*(A)= \sup\ \{PA\mid
  P\in \probset\}$.  This setting describes situations where agents
  are unsure about the actual probability distribution.  Further
  alternative notions of likelihood include Dempster-Shafer belief
  functions and Dubois-Prade possibility
  measures~\cite{HalpernPucella02b}. An extension of the modal
  signature of likelihood is the modal signature of
  \emph{expectation}~\cite{HalpernPucella02b}, where instead of
  likelihoods one more generally considers expectations
  $e(\sum_{j=1}^{n_k}b_{ij}\phi_{ij})$. Here, linear combinations of
  formulas represent \emph{gambles}, i.e.\ real-valued outcome
  functions, where the payoff of $\phi$ is the characteristic function
  of $\phi$. The exact definition of expectation depends on the
  underlying notion of likelihood.

  One-step logics of quantitative uncertainty are often considered to be
  of independent interest. E.g.\ the one-step logic of probability,
  i.e.\ a logic without nesting of likelihoods that talks only about a
  single probability distribution, is introduced
  independently~\cite{FaginHalpernMegiddo90} and only later extended to
  a full modal logic~\cite{FaginHalpern94}. In fact, logics of
  expectation~\cite{HalpernPucella02b} and the logic of upper
  probability~\cite{HalpernPucella02} so far appear in the literature
  only as one-step logics; the corresponding modal logics are of
  interest as natural variations of the modal logic of probability.
\end{sparenumerate}
\end{expl}\noindent

\begin{conv}\label{conv:rep}
  We assume that $\Lambda$ is equipped with a size measure, thus
  inducing a size measure for $\FLang(\Lambda)$. For one-step formulas
  $\phi$ over $V$, we assume w.l.o.g.\ that $|V|\le\size(\phi)$.
  For finite $X$, we assume given a representation of elements
  $(t,x)\in SX$ as strings of size $\size(t,x)$ over some finite
  alphabet. We do not require that all elements of $SX$ are
  representable, nor that all strings denote elements of $SX$. When
  $S$ is trivially copointed, we represent only $t\in S_0(X)$. We
  require that inclusions $SX\subseteq SY$ induced according to
  Assumption~\ref{ass:injective} by inclusions $X\subseteq Y$ into a
  finite base set $Y$
  preserve representable elements and increase their size by at most
  $\log|Y|$.
\end{conv}
\noindent We make these issues explicit for the above examples:
\begin{expl}\label{expl:rep}
\begin{sparenumerate}
\item \emph{Modal logics $K$ and $T$:} For $X$ finite, elements of
  $\powerset X$ are represented as lists of elements of $X$. 
\item \label{item:rep-ck}\emph{Conditional logics $\CK$ and \CKId:} For
  $X$ finite, elements of $\contrapow X \to\powerset X$ are represented
  as partial maps $f_0:\contrapow X\rightharpoonup\powerset X$;
  such an $f_0$ represents the total map $f$ that extends $f_0$ by
  $f(A)=\emptyset$ when $f_0(A)$ is undefined. (The use of partial maps
  avoids exponential blowup.)
\item \emph{Conditional logic \CKmp:} For $X$ finite, a pair $(f_0,x)$
  consisting of a partial map
  $f_0:\contrapow X\rightharpoonup\powerset X$ and $x\in X$
  represents the pair $(f,x)$, where $f:\contrapow X\to\powerset X$
  extends $f_0$ by $f(A)=A\cap\{x\}$ in case $f_0(A)$ is undefined.
\item \emph{Modal logics of quantitative uncertainty:} Suitable compact
  representations are described
  in~\cite{FaginHalpernMegiddo90,HalpernPucella02,HalpernPucella02b}.
\end{sparenumerate}
\end{expl}\noindent

\section{Polynomially branching shallow models} \label{sec:ospmp}

\noindent We now turn to the announced construction of polynomially
branching shallow models for modal logics whose one-step logic has a
small model property; this construction leads to a \PSPACE
decision procedure.
\begin{defn}\label{def:ospmp}
  We say that $\Struct$ has the \emph{one-step polysize model property
    (OSPMP)} if there exist polynomials $p$ and $q$ such that,
  whenever a one-step pair $(\phi,\psi)$ over $V$ has a one-step model
  $(X,\tau,t,x)$, then it has a one-step model $(Y,\kappa,s,y)$ such
  that $|Y|\le p(|\psi|)$, $(s,y)$ is representable with
  $\size(s,y)\le q(|\psi|)$, and $y\in\kappa(a)$ iff $x\in\tau(a)$ for
  all $a\in V$.
\end{defn}\noindent
In analogy to the transition between rules and axioms described
in~\cite{Schroder06}, one-step pairs are interchangeable with one-step
formulas. In particular, we have
\begin{prop}\label{prop:ospmp-axiom} 
  The $\Lambda$-structure $\Struct$ has the OSPMP iff there exist
  polynomials $p$, $q$ such that, whenever a one-step formula $\psi$
  over $V$ has a one-step model $(X,\tau,t,x)$, then it has a one-step
  model $(Y,\kappa,s,y)$ such that $y\in\kappa(a)$ iff $x\in\tau(a)$
  for all $a\in V$, $|Y|\le p(|\psi_1|)$, and $(s,y)$ is representable
  with $\size(s,y)\le q(|\psi_1|)$, where $\psi\equiv\psi_1\sigma$
  with $\psi_1\in\Prop(\Lambda(W))$ and $\sigma$ a
  $\Prop(V)$-substitution.
\end{prop}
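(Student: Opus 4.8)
The plan is to derive both implications from Lemma~\ref{lem:pairs}, which relates one-step pairs and one-step formulas in both directions, together with one elementary observation that I would record first: the valuation-matching condition ``$y\in\kappa(a)$ iff $x\in\tau(a)$ for all $a\in V$'' appearing in Definition~\ref{def:ospmp} automatically propagates to the whole propositional layer, i.e.\ it entails $y\in\PLSem{\chi}{\kappa}$ iff $x\in\PLSem{\chi}{\tau}$ for every $\chi\in\Prop(V)$ (by an immediate induction on $\chi$). This is exactly what lets the valuation condition survive the translations of Lemma~\ref{lem:pairs}, even though those translations change valuations.

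For the direction from the OSPMP to the formula version, I would take a one-step formula $\psi$ over $V$ with a one-step model $(X,\tau,t,x)$ and apply the second part of Lemma~\ref{lem:pairs} to obtain the decomposition $\psi\equiv\psi_1\sigma$ (with $\psi_1\in\Prop(\Lambda(W))$, $V\cap W=\emptyset$) and the associated one-step pair $(\phi,\psi_1)$ over $V\cup W$ for which restriction of valuations to $V$ is a bijection between one-step models of $(\phi,\psi_1)$ and of $\psi$. Pulling $(X,\tau,t,x)$ back along this bijection yields a one-step model $(X,\tau',t,x)$ of $(\phi,\psi_1)$ with $\tau'|_V=\tau$; the OSPMP, applied to the pair $(\phi,\psi_1)$, then produces a one-step model $(Y,\kappa,s,y)$ of $(\phi,\psi_1)$ with $|Y|\le p(|\psi_1|)$, $(s,y)$ representable and $\size(s,y)\le q(|\psi_1|)$, and $y\in\kappa(a)$ iff $x\in\tau'(a)$ for all $a\in V\cup W$. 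Pushing this model forward along the bijection, i.e.\ restricting $\kappa$ to $V$, gives a one-step model of $\psi$; since $\tau'|_V=\tau$, the valuation condition specialises to $y\in\kappa(a)$ iff $x\in\tau(a)$ for $a\in V$, and the size bounds are already stated in terms of $|\psi_1|$. This direction is essentially bookkeeping once Lemma~\ref{lem:pairs} is in hand.

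For the converse, I would start from a one-step pair $(\phi,\psi)$ over $V$ with a one-step model $(X,\tau,t,x)$; then $\PLSem{\phi}{\tau}=X\ni x$, so $\phi$ is satisfiable and the first part of Lemma~\ref{lem:pairs} supplies a $\Prop(V)$-substitution $\sigma$ with $\psi\sigma$ equivalent to $(\phi,\psi)$. Hence $(X,\tau,t,x)$ is a one-step model of the one-step formula $\psi\sigma$, which I feed to the formula version, taking its modal skeleton to be $\psi\in\Prop(\Lambda(V))$ itself (so that the bounds come out in terms of $|\psi|$, as required by Definition~\ref{def:ospmp}). This returns a one-step model $(Y,\kappa,s,y)$ of $\psi\sigma$ with $|Y|\le p(|\psi|)$, $(s,y)$ representable and $\size(s,y)\le q(|\psi|)$, and $y\in\kappa(a)$ iff $x\in\tau(a)$ for $a\in V$. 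By Lemma~\ref{lem:pairs}, $(Y,\sigma\kappa,s,y)$ is then a one-step model of $(\phi,\psi)$ of the right size --- the $S$-membership $(s,y)\in SY$ being untouched by the change of valuation --- and it remains only to check that $y\in(\sigma\kappa)(a)$ iff $x\in\tau(a)$ for $a\in V$.

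This last point is where I expect the one genuine subtlety to lie, since $\sigma\kappa$ is in general a different valuation from the $\kappa$ on which the formula version gave us control. It is handled using the propagation fact from the first paragraph: from $y\in\kappa(a)$ iff $x\in\tau(a)$ we obtain $y\in\PLSem{\phi}{\kappa}$ iff $x\in\PLSem{\phi}{\tau}=X$, hence $y\in\PLSem{\phi}{\kappa}$; and since $\phi\modimpl(a\modiff\sigma(a))$ is a propositional tautology (this is precisely what the proof of Lemma~\ref{lem:pairs} establishes), it follows that $y\in\kappa(a)$ iff $y\in\PLSem{\sigma(a)}{\kappa}=(\sigma\kappa)(a)$, which combined with $y\in\kappa(a)$ iff $x\in\tau(a)$ completes the verification. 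Everything else --- checking that the tuples produced are legitimate one-step models and that the size measures match up --- is routine.
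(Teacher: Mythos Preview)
Your proposal is correct and follows the same approach as the paper: both directions are obtained by shuttling one-step models back and forth through the two halves of Lemma~\ref{lem:pairs}, applying the relevant small-model hypothesis in between. In the converse direction you, like the paper, apply the formula version to $\psi\sigma$ with modal skeleton $\psi_1=\psi$ so that the bounds come out in $|\psi|$.

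The one point worth noting is that your ``genuine subtlety'' is in fact a detail the paper glosses over: the paper's converse simply says that Lemma~\ref{lem:pairs} yields a one-step model of $(\phi,\psi)$ ``with the components $Y$, $s$, $y$ unchanged'', without verifying that the new valuation $\sigma\kappa$ still matches $\tau$ at $y$ versus $x$. Your argument---propagating the matching condition to $\phi\in\Prop(V)$, deducing $y\in\PLSem{\phi}\kappa$, and then using the tautology $\phi\modimpl(a\modiff\sigma(a))$ from the proof of Lemma~\ref{lem:pairs} to conclude $(\sigma\kappa)(a)\ni y$ iff $\kappa(a)\ni y$---is exactly the right way to close this, and is a genuine (if small) improvement on the paper's presentation.
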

\begin{proof}
  \emph{Only if:} Let $(X,\tau,t,x)$ be a one-step model of a one-step
formula $\psi$ over $V$.  By Lemma~\ref{lem:pairs}, $\psi$ is equivalent
to a one-step pair of the form $(\phi,\psi_1)$, with $\psi_1$ as in the
statement. By the OSPMP, $(\phi,\psi_1)$ has a one-step model
$(Y,\kappa,s,y)$ such that $|Y|\le p(|\psi_1|)$, $\size(s,y)\le
q(|\psi_1|)$, and $y\in\kappa(a)$ iff $x\in\tau(a)$ for all $a\in V$; by
Lemma~\ref{lem:pairs}, this model gives rise to a one-step model of
$\psi$ with the components $Y$, $s$, $y$ unchanged.

\emph{If:} Let $(X,\tau,t,x)$ be a one-step model of a one-step pair
$(\phi,\psi)$ over $V$. By Lemma~\ref{lem:pairs}, $(\phi,\psi)$ is
equivalent to a one-step formula of the form $\psi\sigma$, where
$\sigma$ is a $\Prop(V)$-substitution. By assumption, $\psi\sigma$ has a
one-step model $(Y,\kappa,s,y)$ such that $|Y|\le p(|\psi|)$,
$\size(s,y)\le q(|\psi|)$, and $y\in\kappa(a)$ iff $x\in\tau(a)$ for all
$a\in V$. By Lemma~\ref{lem:pairs}, this model gives rise to a one-step
model of $(\phi,\psi)$ with the components $Y$, $s$, $y$ unchanged.
\end{proof}
\noindent
Both formulations of the OSPMP easily reduce to the case that
$\psi$ is a conjunctive clause.

\begin{rem}\label{rem:fmp}
It is shown in~\cite{Schroder06} that the one-step logic always has an
exponential-size model property: a one-step formula $\psi$ over $V$ has
a one-step model iff it has a one-step model with carrier set
$\powerset(V)$.
\end{rem}\noindent
We are now ready to prove the shallow model theorem.
\begin{defn}
A \emph{supporting Kripke frame} of an $S$-coalgebra $(X,\xi)$ is a
Kripke frame $(X,R)$ such that for each $x\in X$,
\begin{equation*}
\xi(x)\in S_0\{y\mid xRy\}\subseteq S_0X
\end{equation*}
(equivalently $(\xi(x),x)\in S\{y\mid xRy\}$).
A state $x\in X$ is a \emph{loop} if $xRx$.
\end{defn}\noindent

\begin{thm}[Shallow model property]\label{thm:shallowmodels}
If $\Struct$ has the OSPMP, then $\FLang(\pls)$ has the
\emph{polynomially branching shallow model property}: There exist
polynomials $p$, $q$ such that every satisfiable $\FLang(\pls)$-formula
$\psi$ is satisfiable at the root of an $S$-coalgebra $(X,\xi)$
which has a supporting Kripke frame $(X,R)$ such that removing all loops
from $(X,R)$ yields a tree of depth at most $\rank(\psi)$ and branching
degree at most $p(|\psi|)$, and $(\xi(x),x)\in S\{y\mid xRy\}$ is
representable with $\size(\xi(x),x)\le q(|\psi|)$.
\end{thm}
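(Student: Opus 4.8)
The plan is to induct on the rank of $\psi$, building the shallow model top-down and using the OSPMP at each node to keep the branching polynomial. Let me sketch the construction.

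\medskip

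\noindent\textbf{Setup and base case.}
Given a satisfiable $\psi\in\FLang(\pls)$, fix a coalgebra $C=(Z,\zeta)$ and state $z_0$ with $z_0\modelsCA_C\psi$. Let $\Sigma$ be a suitable finite closure of $\{\psi\}$ under subformulas (and, to handle the copointed condition, ensure that top-level propositional combinations are available); each state $z$ of $C$ determines a Hintikka-style \emph{type} $\mi{th}(z)=\{\chi\in\Sigma\mid z\modelsCA_C\chi\}$. I will build the desired coalgebra $(X,\xi)$ together with a supporting frame $(X,R)$ and a labelling $\ell\colon X\to Z$ such that for every $\chi\in\Sigma$ with $\rank(\chi)$ not exceeding the remaining tree depth below $x$, we have $x\modelsCA\chi$ iff $\ell(x)\modelsCA_C\chi$. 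If $\rank(\psi)=0$ then $\psi$ is purely propositional and there is nothing to do beyond taking a single reflexive state whose transition is any representable element of $S\{x\}$ (non-triviality of $S$ and Assumption~\ref{ass:injective} guarantee one exists after reducing to a one-element carrier; representability follows from the $|Y|=1$ case of the OSPMP applied to the trivial one-step pair).

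\medskip

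\noindent\textbf{Inductive step: one node.}
Suppose $x$ has been placed with label $z=\ell(x)$ and we must define its successor structure. The top-level modal atoms $\ModOp(\chi_1,\dots,\chi_n)\in\Sigma$ that hold (resp.\ fail) at $z$ give, via $\zeta(z)\in\Sem{\ModOp}(\dots)$, a one-step formula $\psi_x$ over a variable set $V$ in bijection with the relevant subformulas $\chi_i$, together with the $\powerset(Z)$-valuation $\tau(a_{\chi_i})=\Sem{\chi_i}_C$; the pair $(\zeta(z),z)$ is a one-step model of $\psi_x$, and additionally $z\in\tau(a_{\chi_i})$ iff $z\modelsCA_C\chi_i$. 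Apply the OSPMP (in the conjunctive-clause reduction noted after Proposition~\ref{prop:ospmp-axiom}, or directly via Proposition~\ref{prop:ospmp-axiom}) to obtain a one-step model $(Y,\kappa,s,y)$ with $|Y|\le p(|\psi_x|)$, $\size(s,y)\le q(|\psi_x|)$, and $y\in\kappa(a_{\chi_i})$ iff $z\in\tau(a_{\chi_i})$. Now create new states $x_w$ for $w\in Y\setminus\{y\}$, identify $x_y$ with $x$ itself (this is the loop), set $xRx_w$ for all $w\in Y$ (so in particular $xRx$), put $\xi(x)=s$ viewed inside $S_0\{x_w\mid w\in Y\}\subseteq S_0X$ via Assumption~\ref{ass:injective}, and label $x_w$ by choosing, for each $w$, some $z_w\in Z$ whose type on $\Sigma$ agrees with $\kappa$ at $w$ in the sense that $z_w\modelsCA_C\chi_i$ iff $w\in\kappa(a_{\chi_i})$ — such a $z_w$ exists because each $w\in\kappa(a_{\chi_i})$-pattern is witnessed somewhere in $C$ by soundness of predicate liftings (more carefully: one reads off a satisfiable combination of the $\chi_i$ from $\kappa$ and picks any $C$-state realizing it). The loop state $x_y=x$ is relabelled consistently since $y$'s pattern already matches $z$'s. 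Recurse into each genuinely new $x_w$; since each $\chi_i$ has strictly smaller rank than the modal formula it came from, the remaining tree depth decreases, so the recursion terminates after at most $\rank(\psi)$ levels.

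\medskip

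\noindent\textbf{Verification and bookkeeping.}
The supporting-frame condition $\xi(x)\in S_0\{y\mid xRy\}$ holds by construction. Removing loops yields a tree: every state except duplicated loop-copies has a unique parent, and depth is bounded by $\rank(\psi)$ by the rank-decrease argument; branching degree at each node is $|Y|\le p(|\psi_x|)\le p(|\psi|)$ since $|\psi_x|$ is bounded by the number of top-level modal atoms in $\Sigma$, hence by $|\psi|$ (here one uses the size conventions of Convention~\ref{conv:rep}, in particular $|V|\le\size(\psi_x)$). Representability with $\size(\xi(x),x)\le q(|\psi|)$ follows from the OSPMP bound $q(|\psi_x|)$ plus the at-most-$\log|X|$ size increase under the inclusion $S\{x_w\mid w\in Y\}\subseteq SX$ from Convention~\ref{conv:rep} — one must check $\log|X|$ is itself polynomial, which holds since $|X|$ is at most $p(|\psi|)^{\rank(\psi)}$-bounded... \emph{and this is the one place that needs care}: the naive bound on $|X|$ is exponential, so $\log|X|$ is polynomial, which is fine, but we should state $q$ accordingly. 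The core correctness claim — that $x\modelsCA\chi$ iff $\ell(x)\modelsCA_C\chi$ for $\chi\in\Sigma$ of admissible rank — is proved by a subsidiary induction on $\chi$: boolean cases are immediate from the type-agreement of labels, and the modal case $\chi=\ModOp(\chi_1,\dots,\chi_n)$ is exactly the statement that $(\xi(x),x)=(s,y)$ lies in $\Sem{\ModOp}$ of the successor-extensions of the $\chi_j$, which is guaranteed because $(Y,\kappa,s,y)$ was chosen as a one-step model of $\psi_x$ and, by the inner induction applied one level down (the $\chi_j$ have smaller rank), $\Sem{\chi_j}$ restricted to the successors of $x$ coincides with $\kappa(a_{\chi_j})$.

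\medskip

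\noindent\textbf{Main obstacle.}
The delicate point is the interaction between the loop and the copointed/$T$-type frame conditions: the OSPMP is stated so that the returned one-step model's distinguished state $y$ satisfies exactly the same variables as $x$, which is precisely what makes it sound to identify $x_y$ with $x$ and thereby install the loop $xRx$ without breaking either the supporting-frame inclusion $(\xi(x),x)\in S\{y\mid xRy\}$ or the type-agreement invariant. Getting this identification to respect $\Sigma$-types at all subsequent levels — so that the loop does not force us to re-examine $x$ at a lower rank budget inconsistently — is where the argument must be handled with the most attention; everything else is routine two-level induction and size bookkeeping.
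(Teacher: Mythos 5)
Your overall strategy --- induction on rank, applying the OSPMP at each node to bound the branching, and identifying the distinguished point of the small one-step model with the current state so as to install the loop --- is the same as the paper's. However, there is a genuine gap at the labelling step. You apply the OSPMP to the bare one-step formula $\psi_x$ and then claim that for each point $w$ of the resulting small model $(Y,\kappa,s,y)$ one can find a state $z_w$ of the original coalgebra whose $\Sigma$-type matches $\kappa$ at $w$, ``because each pattern is witnessed somewhere in $C$''. This does not follow. The variables $a_{\chi_i}$ abstract away all logical dependencies among the formulas $\chi_i$, so the OSPMP is free to return a model in which some $w$ satisfies, say, $a_{\chi_2\wedge\chi_3}$ but not $a_{\chi_2}$. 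Such a $w$ carries an unsatisfiable type, no $z_w$ exists, and the recursive construction below $x_w$ cannot be carried out. Naturality (``soundness'') of predicate liftings gives no information whatsoever about the valuation at individual points of an abstract one-step model, so it cannot rescue this step.

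The paper's proof closes exactly this hole, and this is the reason the OSPMP is formulated for one-step \emph{pairs}: the first component $\phi$ is taken to be the entire propositional theory of the substitution $\sigma$, i.e.\ the conjunction of all clauses $\chi$ over $V$ such that $\chi\sigma$ is valid. Since the original structure yields a one-step model of $(\phi,\bar\psi)$ and the OSPMP preserves the first component (the small model still satisfies $\phi$ at all points), every point $y$ of the small model has $\Th_\tau(y)\sigma$ satisfiable --- otherwise $\neg\Th_\tau(y)$ would be a conjunct of $\phi$, contradicting $y\in\PLSem{\Th_\tau(y)}\tau$. This $\phi$ may be of exponential size, which is precisely why Definition~\ref{def:ospmp} lets the polynomial bounds depend only on the second component (see Remark~\ref{rem:ospmp}); your proposal bypasses exactly this mechanism. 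A secondary remark: once the types of the points in $Y$ are known to be satisfiable, the paper simply invokes the rank induction hypothesis on the formulas $\Th_\tau(y)\sigma$ (which have rank at most $n$) and glues the resulting shallow models in by disjoint union, rather than maintaining a labelling into the original coalgebra across levels; this makes the bookkeeping you worry about in your final paragraph unnecessary. With the propositional-theory component added to the one-step pair, the rest of your argument goes through and is essentially the paper's.
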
\noindent
\begin{defn}
For $x\in X$ and a $\powerset(X)$-valuation $\tau$, we put 
\begin{math}
\textstyle\Th_\tau(x)\equiv\Land_{x\in\tau(a)}a\conj
    \Land_{x\notin\tau(a)}\neg a.
\end{math}
\end{defn}\noindent
\begin{proof}[Proof of Theorem~\ref{thm:shallowmodels}]
Induction over the rank of $\psi$.  If $\rank(\psi)=0$, then $\psi$
evaluates to $\top$ and hence is satisfied in a singleton $S$-coalgebra
$(X,\xi)$, which exists by Assumption~\ref{ass:injective}.

Now let $\rank(\psi)=n+1$. Let $z_0$ be a state in an $S$-coalgebra
$(Z,\zeta)$ such that $z_0\modelsCA_{(Z,\zeta)}\psi$.  Let $\Sub(\psi)$
denote the set of subformulas of $\psi$ occuring in $\psi$ within the
scope of a modal operator, let $V$ be the set of variables $a_\rho$,
indexed over $\rho\in\Sub(\psi)$, and let $\sigma$ denote the
substitution taking $a_\rho$ to $\rho$ for all $\rho$. Let $\bar\psi$ be
the conjunction of all literals
$\epsilon\ModOp(a_{\rho_1},\dots,a_{\rho_n})$ such that
$\ModOp(\rho_1,\dots,\rho_n)$ is a subformula of $\psi$ and
$z_0\modelsCA_{(Z,\zeta)}\epsilon\ModOp(\rho_1,\dots,\rho_n)$. (Recall
that $\epsilon$ denotes either nothing or negation.) Moreover, let
$\phi$ denote the propositional theory of $\sigma$, i.e.\ the
conjunction of all clauses $\chi$ over $V$ such that $\chi\sigma$ is
$\Lang$-valid.

Then $(Z,\kappa,\zeta(z_0),z_0)$ is a one-step model of
$(\phi,\bar\psi)$, where $\kappa(a)=\Sem{\sigma(a)}_{(Z,\zeta)}$. By the
OSPMP, it follows that $(\phi,\bar\psi)$ has a one-step model
$(Y,\tau,t,x_0)$ of polynomial size in $|\bar\psi|$ such that for all
$\rho\in\Sub(\psi_1)$, $x_0\in\tau(a_\rho)$ iff $z_0\in\kappa(\rho)$,
which in turn is equivalent to $z_0\modelsCA_{(Z,\zeta)}\rho$.

From this model, we now construct a shallow model
$(X,\xi)$ for $\psi$.  To begin, note that $\Th_\tau(y)\sigma$ is
$\Lang$-satisfiable for every $y\in Y$. For suppose not; then
$\neg\Th_\tau(y)\sigma$ is $\Lang$-valid, hence $\neg\Th_\tau(y)$ is a
conjunct of $\phi$. Thus, $Y,\tau\modelsPL\neg\Th_\tau(y)$, in
contradiction to the fact that $y\in\Sem{\Th_\tau(y)}\tau$ by
construction.  By induction, we thus have, for every $y\in Y$, a shallow
model $(X_y,\xi_y)$ of $\Th_\tau(y)\sigma$, where we may assume
$y\in X_y$ and $y\modelsCA_{(X_y,\xi_y)}\Th_\tau(y)\sigma$, with
depth at most $\rank(\Th_\tau(y)\sigma)=n$. We take $(X,\xi)$ as
the disjoint union of the $(X_y,\xi_y)$ over $y\in Y-\{x_0\}$,
extended by the state $x_0$, for which we put $\xi(x_0)=t\in S_0Y\subseteq
S_0X$.

We have to verify that $x_0\modelsCA_{(X,\xi)}\psi$. We will prove the
stronger statement $x_0\modelsCA_{(X,\xi)}\bar\psi\sigma$, i.e.\
\begin{equation}\label{eq:goal}
t\modelsOS_{X,\theta} \bar\psi,
\end{equation}
where $\theta(a_\rho)=\Sem{\rho}_{(X,\xi)}$ for
$\rho\in\Sub(\psi)$.

By induction over $\chi$ and naturality of predicate liftings,
$y\modelsCA_{(X,\xi)}\chi$ iff
$y\modelsCA_{(X_y,\xi_y)}\chi$ for $y\in Y-\{x_0\}$ and for
every formula $\chi$. In particular,
$y\modelsCA_{(X,\xi)}\Th_\tau(y)\sigma$ for all $y\in Y-\{x_0\}$,
i.e.\
\begin{equation}\label{eq:y-x}
y\modelsCA_{(X,\xi)}\rho\iff y\in\tau(a_\rho)
\end{equation}
for all $\rho\in\Sub(\psi)$. We prove by induction over
$\rho\in\Sub(\psi)$ that
\begin{equation}\label{eq:x}
x_0\modelsCA_{(X,\xi)}\rho\iff x_0\in\tau(a_\rho),
\end{equation}
which in connection with~(\ref{eq:y-x}) yields
\begin{equation}\label{eq:tau}
\Sem{\rho}_{(X,\xi)}\cap Y = \tau(a_\rho).
\end{equation}
The steps for boolean operations are straightforward. For
$\ModOp(\rho_1,\dots,\rho_n)\in\Sub(\psi)$, we have
\begin{align*}
&  x_0\modelsCA_{(X,\xi)}\ModOp(\rho_1,\dots,\rho_n)\\
\iff & t\in\Sem{\ModOp}_Y(\Sem{\rho_i}_{(X,\xi)}\cap
Y)_{i=1,\dots,n}= \Sem{\ModOp}_Y(\tau(a_{\rho_1}),\dots,\tau(a_{\rho_n}))\\
\iff & t\modelsOS_{(Y,\tau)} \ModOp(a_{\rho_1},\dots,a_{\rho_n}),
\end{align*}
using naturality of $\Sem{\ModOp}$ in the first step and the inductive
hypothesis in the shape of (\ref{eq:tau}) in the subsequent
equality. Since $t\models_{(Y,\tau)}\bar\psi$, the last statement is
equivalent to $z_0\modelsCA_{(Z,\zeta)} L(\rho_1,\dots,\rho_n)$. By the
definition of $\kappa$, this is equivalent to
$z_0\in\kappa(a_{L(\rho_1,\dots,\rho_n)})$, which in turn is equivalent
to $x_0\in\tau(a_{L(\rho_1,\dots,\rho_n)})$ by construction of
$(Y,\tau,t,x_0)$.

By~(\ref{eq:tau}) and naturality of predicate liftings, our remaining
goal~(\ref{eq:goal}) reduces to $t\modelsOS_{Y,\tau}\bar\psi$, which holds
by construction.

Finally, we have to establish that the overall branching degree of the
model is polynomial in $|\psi|$. The model is recursively constructed
from polynomial-size one-step models for pairs whose second components
are conjunctive clauses over atoms $L(a_{\rho_1},\dots,a_{\rho_n})$,
where $L(\rho_1,\dots,\rho_n)$ is a subformula of $\psi$. Such
conjunctive clauses are of at most quadratic size in $|\psi|$ (even
$O(|\psi|\log|\psi|)$ if subformulas of $\psi$ are represented by
pointers into $\psi$); this proves the claim.
\end{proof}

\begin{rem}\label{rem:ospmp}
While it is to be expected that the construction of polynomially
branching models depends on a condition like the OSPMP, it does not seem
to be the case that the precise formulation of this condition is
implicit in the literature (not even for the trivially copointed case).
Note in particular that the polynomial bound depends only on the second
component of a one-step pair. This is crucial, as the first component of
the one-step pair constructed in the above proof may be of exponential
size.  When we say in the introduction that the OSPMP can be obtained
from off-the-shelf results
(e.g.~\cite{FaginHalpernMegiddo90,HalpernPucella02,HalpernPucella02b}),
we refer to polynomial-size model theorems in which the polynomial bound
depends, in the notation of Proposition~\ref{prop:ospmp-axiom}, on
$|\psi|$, which may be exponentially larger than $|\psi_1|$; typically,
only an inspection of the given proofs shows that the bound can be
sharpened to be polynomial in~$|\psi_1|$.
\end{rem}\noindent
The proof of Theorem~\ref{thm:shallowmodels} leads to the following
nondeterministic decision procedure.
\begin{algorithm}\label{alg:pspace}
(Decide satisfiability of an $\FLang(\pls)$-formula $\psi$) Let
$\Struct$ have the OSPMP, and let $p$, $q$ be polynomial bounds as in
Definition~\ref{def:ospmp}. 
\begin{algenumerate}
\item If $\rank(\psi)=0$, terminate successfully if $\psi$ evaluates to
$\top$, else unsuccessfully. Otherwise:
\item \label{step:osformula} Take $V$ and $\sigma$ as in the proof of
  Theorem~\ref{thm:shallowmodels}, and guess a conjunctive clause
  $\bar\psi$ over $\Lambda(V)$ containing for each subformula
  $L(\rho_1,\dots,\rho_n)$ of $\psi$ either
  $L(a_{\rho_1},\dots,a_{\rho_n})$ or $\neg
  L(a_{\rho_1},\dots,a_{\rho_n})$ such that $\bar\psi\sigma$
  propositionally entails $\psi$.
\item \label{step:polybranch} Guess a $\powerset(Y)$-valuation $\tau$
  for $V$ and $(t,x)\in SY$ with $\size(t,x)\le q(|\bar\psi|)$, where
  $Y=\{1,\dots,p(|\bar\psi|)\}$, such that
  $t\modelsOS_{Y,\tau}\bar\psi$.
\item \label{step:recurse} For each $y\in Y$, check recursively that
  $\Th_\tau(y)\sigma$ is satisfiable.
\end{algenumerate}
\end{algorithm}
\noindent Since the rank decreases with each recursive call, the above
algorithm can be implemented in polynomial space, provided that
Step~\ref{step:polybranch} can be performed in polynomial space.
\begin{defn}
  The \emph{one-step model checking problem} is to check, given a
  string $s$, a finite set $X$, $A_1,\dots,A_n\subseteq X$, and
  $L\in\Lambda$ $n$-ary, whether $s$ represents some $(t,x)\in SX$ and
  whether $t\in\Sem{L}_X(A_1,\dots,A_n)$.
\end{defn}\noindent
This property and the above algorithm lead to a \PSPACE bound for the
modal logic. Moreover, for bounded-rank fragments, the polynomially
branching shallow model property becomes a polynomial size model
property, thus leading to an \NP upper bound:
\begin{cor}\label{cor:complexity}
Let $\Struct$ have the OSPMP.
\vspace{-0.7em}
\begin{sparenumerate}
\item \label{item:pspace} If one-step model checking is in \PSPACE,
  then the satisfiability problem of $\FLang(\pls)$ is in \PSPACE.
\item \label{item:np} If one-step model checking is in $P$, then the
  satisfiability problem of $\FLang_n(\pls)$ is in \NP for every
  $n\in\Nat$.
\end{sparenumerate}
\end{cor}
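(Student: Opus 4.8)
The plan is to read both bounds directly off Algorithm~\ref{alg:pspace} together with the polynomially branching shallow model property of Theorem~\ref{thm:shallowmodels}; the one-step model checking hypotheses supply exactly the ingredient needed to make the algorithm (resp.\ the verification of a guessed model) run within the stated resources.

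For part~\ref{item:pspace} I would show that Algorithm~\ref{alg:pspace} can be run in nondeterministic polynomial space and then appeal to Savitch's theorem ($\mi{NPSPACE}=\PSPACE$). The recursion tree has depth $\rank(\psi)\le|\psi|$, and along any single root-to-leaf branch the formula treated at each level is --- up to the $\Th_\tau(y)\sigma$ construction --- a propositional combination of subformulas of $\psi$ (each of size $\le|\psi|$), so it never blows up; hence the data guessed at each level, namely the conjunctive clause $\bar\psi$, the valuation $\tau$ on the polynomial-size set $Y$, and the representable $(t,x)\in SY$ with $\size(t,x)\le q(|\bar\psi|)$, all occupy polynomial space. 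Evaluating the ``for each $y\in Y$'' loop of Step~\ref{step:recurse} depth-first keeps at most one branch active at a time, so the total working space is (recursion depth) $\times$ (per-level space), i.e.\ polynomial. The one nontrivial point is Step~\ref{step:polybranch}: the test $t\modelsOS_{Y,\tau}\bar\psi$ unfolds, through the boolean clauses and the modal clause of $\modelsOS$, into a boolean combination of polynomially many instances of one-step model checking over the finite set $Y$ (with argument sets $\PLSem{a}\tau=\tau(a)$ and candidate element $t$), plus the check that the guessed string really represents an element $(t,x)\in SY$; by hypothesis each such instance runs in \PSPACE, and polynomially many of them still fit in polynomial space. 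Correctness and termination are already established by Theorem~\ref{thm:shallowmodels} and the discussion surrounding Algorithm~\ref{alg:pspace}.

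For part~\ref{item:np} the key observation is that fixing the rank bound $n$ turns the polynomially branching shallow model of Theorem~\ref{thm:shallowmodels} into a \emph{polynomial-size} one: after collapsing loops it is a tree of depth $\le n$ with branching degree $\le p(|\psi|)$, hence has at most $p(|\psi|)^{n}$ states, which is polynomial in $|\psi|$ since $n$ is constant, and each transition $(\xi(x),x)$ is representable with $\size(\xi(x),x)\le q(|\psi|)$. The \NP procedure therefore guesses such an $S$-coalgebra $(X,\xi)$ with its supporting frame $(X,R)$ and a root state $x_0$ --- all of polynomial size --- and then verifies in deterministic polynomial time that (i) each guessed string indeed represents an element $(\xi(x),x)\in S\{y\mid xRy\}$ and (ii) $x_0\modelsCA_{(X,\xi)}\psi$. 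Claim (i) is one-step model checking by definition; for (ii) one computes $\Sem{\rho}_{(X,\xi)}$ for all subformulas $\rho$ of $\psi$ bottom-up, where the modal step $x\modelsCA_{(X,\xi)}L(\phi_1,\dots,\phi_n)\iff\xi(x)\in\Sem{L}_X(\Sem{\phi_1}_{(X,\xi)},\dots,\Sem{\phi_n}_{(X,\xi)})$ is again an instance of one-step model checking, and there are only polynomially many (state, subformula) pairs. With one-step model checking in $P$ the entire verification is polynomial, so this is an \NP algorithm; soundness and completeness of guess-and-check are immediate from Theorem~\ref{thm:shallowmodels}.

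The genuinely delicate step is the space accounting in part~\ref{item:pspace}: one must check that the objects handled along a recursion branch do not grow from level to level (they do not, since the next-level formulas $\Th_\tau(y)\sigma$ are again propositional combinations of subformulas of $\psi$) and that the depth-first evaluation of the recursive calls retains only one branch --- polynomially many stack frames of polynomial size each --- in memory; given this, Savitch's theorem finishes the job. Everything else is a routine unfolding of the relevant definitions.
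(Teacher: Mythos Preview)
Your proposal is correct and follows essentially the same approach as the paper: Part~\ref{item:pspace} via Algorithm~\ref{alg:pspace} with the recursion-depth/per-level space accounting (the paper compresses this to a single sentence), and Part~\ref{item:np} via guess-and-check of the polynomial-size shallow model from Theorem~\ref{thm:shallowmodels}, verified by bottom-up extension computation using polynomial-time one-step model checking. One minor point: the total number of states in a depth-$n$ tree with branching $p(|\psi|)$ is $\sum_{i=0}^n p(|\psi|)^i$ rather than $p(|\psi|)^n$, and the paper also tracks the $\log N$ blowup in representation size from Convention~\ref{conv:rep} when viewing $\xi(x)$ in the full carrier; neither affects your conclusion.
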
\noindent
\begin{proof}
\emph{1:} By Algorithm~\ref{alg:pspace}.

\emph{2:} Let $p$ and $q$ be polynomial bounds on the branching degree of
supporting Kripke frames and on the size of successor structures
$\xi(x)$ as guaranteed by Theorem~\ref{thm:shallowmodels}. Let
$n\in\Nat$.  Then by Theorem~\ref{thm:shallowmodels}, every satisfiable
formula $\psi\in\FLang_n(\Lambda)$ is satisfiable in a model
$(X,\xi)$ such that $|X|\le\sum_{i=0}^np(|\psi|)^i=:N$ and
$\size(\xi(x))\le \log(N)q(|\psi|)$ for all $x\in X$, where the second
inequality relies also on Convention~\ref{conv:rep}. Thus, the entire
representation size of the model $(X,\xi)$ is bounded by
$M:=N\log(N) q(|\psi|)$, which is polynomial in $|\psi|$. Thus, the
following non-deterministic algorithm decides satisfiability of $\psi$
in polynomial time:
\begin{sparenumerate}
\item Guess a model $(X,\xi)$ of size at most $M$
\item Check that $(X,\xi)$ is an $S$-coalgebra.
\item Check that $\Sem{\psi}_{(X,\xi)}\neq\emptyset$.
\end{sparenumerate}
\noindent 
The second step can be performed in polynomial time because one-step
model checking is in $P$. The third step can be performed in polynomial
time by recursively computing extensions $\Sem{\phi}_{(X,\xi)}$, again
because one-step model checking is in $P$.
\end{proof}
\noindent This generalises results for the modal logics $K$ and $T$
established in~\cite{Halpern95}.
\begin{expl}\label{expl:appl}
\begin{sparenumerate}
\item\label{item:appl-k} \emph{Modal logics $K$ and $T$:} One-step
  model checking for $K$ and $T$ amounts to checking a subset
  inclusion and, in the case of $T$, additionally an elementhood; this
  is clearly in $P$. To verify the OSPMP for $K$, let $(X,\tau,A)$ be
  a one-step model of a one-step pair $(\phi,\psi)$ over $V$;
  w.l.o.g.\ $\psi$ is a conjunctive clause over atoms $\Box a$, where
  $a\in V$. For $\neg\Box a$ in $\psi$, there exists $x_a\in A$ such
  that $x_a\notin\tau(a)$. Taking $Y$ to be the set of these $x_a$, we
  obtain a polynomial-size one-step model $(Y,\tau_Y,Y)$ of
  $(\phi,\psi)$, where $\tau_Y(a)=\tau(a)\cap Y$ for all $a$. The
  construction for $T$ is the same, except that the point $x$ of the
  original one-step model $(X,\tau,A,x)$ is retained in the carrier
  set $Y$, and becomes the point of the small model. By
  Corollary~\ref{cor:complexity}, this reproves Ladner's \PSPACE upper
  bounds for $K$ and $T$~\cite{Ladner77}, as well as Halpern's \NP
  upper bounds for bounded-rank fragments~\cite{Halpern95}.
\item \label{item:appl-ck} \emph{Conditional logic:} It is easy to see
  that one-step model checking for $\CK$, \CKId, and \CKmp is in~$P$. (In
  particular, deciding whether a given string represents an element of
  $\CFid(X)$ just amounts to checking subset inclusions. Moreover,
  deciding whether $(f,x)\in\CFmp(X)$, i.e.\ whether $x\in A$ implies
  $x\in f(A)$, can be done in polynomial time thanks to the choice of
  default value for $f$; cf.\ Example~\ref{expl:rep}.\ref{item:rep-ck}.)

  To prove that $\mi{\CK}$ has the OSPMP, let $(X,\tau,f)$ be a
  one-step model of a one-step pair $(\phi,\psi)$, where w.l.o.g.\
  $\psi$ is a conjunctive clause
  $\Land_{i=1}^n\epsilon_i(a_i\Rightarrow b_i)$. If
  $\tau(a_i)\neq\tau(a_j)$, fix an element $y_{ij}$ in the symmetric
  difference of $\tau(a_i)$ and $\tau(a_j)$. Moreover, if $\epsilon_i$
  is negation, fix $z_i\in f(\tau(a_i))\setminus\tau(b_i)$. Let $Y$ be
  the set of all $y_{ij}$ and all $z_i$. Let $\tau_Y$ be the
  $\powerset(Y)$-valuation defined by $\tau_Y(v)=\tau(v)\cap Y$, and
  let $f_Y\in \CF(Y)$ be represented by the partial map taking
  $\tau_Y(a_i)$ to $f(\tau(a_i))\cap Y$ for all $i$ (this is
  well-defined by construction of $Y$). Then $(Y,\tau_Y,f_Y)$ is a
  one-step model of $(\phi,\psi)$. The cardinality of $Y$ is quadradic
  in $\psi$, and the representation size of $f_Y$ is polynomial.

  

  Thanks to the choice of default value, this construction of
  polynomial-size one-step models works also for \CKId.  The
  construction for \CKmp is almost identical, except that the point $x$
  of $(X,\tau,f,x)$ is retained in the small one-step model
  $(Y,\tau_Y,f_Y,x)$; here, $(f_Y,x)\in\CFmp(Y)$ due to the different
  choice of default value.


  We thus obtain that $\mi{CK}$, \CKId, and \CKmp are in \PSPACE
  (hence \PSPACE-complete, as these logics contain~$K$ and --- in
  the case of \CKmp --- $T$, respectively, as sublogics). This has
  previously been proved using a detailed analysis of a labelled sequent
  calculus~\cite{OlivettiEA07} (the method of~\cite{OlivettiEA07} yields
  an explicit polynomial bound on space usage which is not matched by
  the generic algorithm). The \NP upper bound for bounded-rank
  fragments of $\CK$, \CKId, and \CKmp arising from
  Corollary~\ref{cor:complexity}.\ref{item:np} is, to our knowledge,
  new.
\item \label{item:appl-prob}\emph{Modal logics of quantitative
  uncertainty:} Polynomial size model properties for one-step logics
  have been proved for the logic of
  probability~\cite{FaginHalpernMegiddo90}, the logic of upper
  probability~\cite{HalpernPucella02}, and various logics of
  expectation~\cite{HalpernPucella02b}. As indicated in
  Remark~\ref{rem:ospmp}, the polynomial bounds are stated in the cited
  work as depending on the size of the entire one-step formula $\psi$;
  however, inspection of the given proofs shows that the polynomial
  bound in fact depends only on the number of likelihoods or
  expectations in $\psi$, respectively, and on the representation size
  of the largest coefficient. By Proposition~\ref{prop:ospmp-axiom}, it
  follows that the respective logics have the OSPMP. Suitable complexity
  estimates for one-step model checking are also found in the cited
  work.

  By the above results, it follows that the respective modal logics of
  quantitative uncertainty are in \PSPACE (hence
  \PSPACE-complete, as one can embed~$KD$ by mapping $\Diamond$ to
  $l(\argument)>0$), and in \NP when the modal nesting depth is
  bounded. For the modal logic of probability, a proof of the
  \PSPACE upper bound is sketched in~\cite{FaginHalpern94}. The
  \PSPACE upper bounds for the remaining cases (e.g.\ the modal
  logic of upper probability and the various modal logics of
  expectation) seem to be new, if only for the reason that only the
  one-step versions of these logics appear in the literature. Similarly,
  all \NP upper bounds for bounded-rank fragments are, to our
  knowledge, new. Moreover, the upper bounds extend easily to modal
  logics of uncertainty with non-iterative axioms, e.g.\ an axiom
  $a\modimpl l(a)\ge p$ which states that the present state remains
  stationary with likelihood at least~$p$.
\end{sparenumerate}
\end{expl}\noindent

\section{Extended Example: Elgesem's modal logic of agency}\label{sec:elgesem}

\noindent There have been numerous approaches to capturing the notion
of agents bringing about certain states of affairs, one of the most
recent ones being Elgesem's modal logic of agency
(\cite{Elgesem97}~and references therein,
\cite{GovernatoriRotolo05}). Modal logics of agency play a role e.g.\
in planning and task assignment in multi-agent systems (cf.\
e.g.~\cite{CholvyEA05,JonesParent07}).

Elgesem defines a logic with two modalities $E$ and $C$ (in general
indexed over agents; all results below easily generalise to the
multi-agent case), read `the agent brings about' and `the agent is
capable of realising', respectively. The semantics is given by a class
of conditional frames $(X,f:X\to\contrapow X\to\powerset X)$
(Example~\ref{expl:logics}.\ref{item:ck}), called \emph{selection
function models} in this context. The clauses for the modal
operators are
\begin{align*}
  x\models E\phi&\quad\textrm{iff}\quad x\in
  f(w)(\Sem{\phi})\quad\textrm{and}\\
  x\models C\phi&\quad\textrm{iff}\quad f(w)(\Sem{\phi})\neq\emptyset.
\end{align*}
The relevant class of selection function models $(X,f)$ is defined by the
conditions
\begin{equation*}
  \begin{axarray}
    \textrm{(E1)} & $f(x)(X)  = \emptyset$\\
    \textrm{(E2)} & $f(x)(A)\cap f(x)(B)\subseteq f(x)(A\cap B)$\\
    \textrm{(E3)} & $f(x)(A) \subseteq A$.
  \end{axarray}
\end{equation*}
It is shown in~\cite{Elgesem97,GovernatoriRotolo05} that the logic of
agency is completely axiomatised by $\neg C\top$, $\neg C\bot$, $Ea\land
Eb\modimpl E(a\land b)$, $Ea\modimpl a$, and $Ea\modimpl Ca$. Notably,
the agent is incapable of realising what is logically necessary ($\neg
C\top$), i.e.\ the notion of realising a state of affairs entails actual
attributability (this axiom is weaker than previous formulations using
\emph{avoidability}; cf.\ the baby food example
in~\cite{Elgesem97}). Monotonicity is not imposed. The axiom $\neg
C\bot$ is due to~\cite{GovernatoriRotolo05}.

Most of the information in selection function models (motivated by
philosophical considerations in~\cite{Elgesem97}) is irrelevant for
the semantics of $E$ and $C$: one only needs to know whether $f(x)(A)$
is non-empty, and whether it contains $x$. Moreover, the selection
function semantics fails to be coalgebraic, as the naturality
condition fails for the (generalised) predicate lifting implicit in
the clause for~$E$. Both problems are easily remedied by moving to the
following coalgebraic semantics: put $3=\{\bot,*,\top\}$ (to represent
the cases $f(x)(A)=\emptyset$, $x\notin f(x)(A)\neq\emptyset$, and
$x\in f(x)(A)$, respectively), and take as signature functor the
\emph{$3$-valued neighborhood functor} $\ThreeNb$ given by
\begin{math}
  \ThreeNb(X)=\contrapow(X)\to 3
\end{math}
(with $\contrapow(X)$ denoting contravariant powerset).  We define the
copointed functor $\AgFunct$ as the subfunctor of $\ThreeNb\times\Id$
such that $(f,x)\in \AgFunct(X)$ iff for all $A,B\subseteq X$,
\begin{equation*}
  \begin{axarray}
    \textrm{(E1$'$)} & $f(X)=\bot$\\
    \textrm{(E2$'$)} & $f(A)\meet f(B)\le f(A\cap B)$\\
    \textrm{(E3a$'$)} & $f(\emptyset)=\bot$\\
    \textrm{(E3b$'$)} & $f(A)=\top \implies x\in A$    
  \end{axarray}
\end{equation*}
where $\meet$ and $\le$ refer to the ordering $\bot<*<\top$.  We define
a structure over $\AgFunct$ for the modal logic of agency by
\begin{align*}
  \Sem{E}_XA & =\{f:\contrapow\to 3\mid f(A)=\top\}\\
  \Sem{C}_XA & =\{f:\contrapow\to 3\mid f(A)\neq\bot\}.
\end{align*}
\begin{prop}\label{prop:agency-models}
  A formula of the modal logic of agency is satisfiable in a selection
  function model iff it is satisfiable over $\AgFunct$.
\end{prop}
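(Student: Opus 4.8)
The plan is to prove the two implications by translating models of each kind into models of the other, exploiting the fact that satisfaction of a formula depends on a selection function model $(X,f)$ only through, at each state $x$, the data ``which sets $A$ satisfy $f(x)(A)=\emptyset$'' and ``which sets $A$ satisfy $x\in f(x)(A)$''. This is precisely the information carried by a three-valued neighbourhood $\xi(x)\in\ThreeNb X$ under the coding $\xi(x)(A)=\bot\Leftrightarrow f(x)(A)=\emptyset$, $\xi(x)(A)=\top\Leftrightarrow x\in f(x)(A)$, and $\xi(x)(A)=*$ otherwise. For either direction of the equisatisfiability claim I would therefore proceed in two steps: (i) verify that the structure obtained by this recoding is again a legal model, i.e.\ that the axioms (E1)--(E3) for selection function models match the conditions (E1$'$)--(E3b$'$) defining $\AgFunct$; and (ii) verify by induction over $\FLang(\Lambda)$-formulas that the two satisfaction relations agree. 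In step (ii) the only non-boolean cases are $E\rho$ and $C\rho$, and these follow directly from the definitions of the liftings $\Sem{E}$ and $\Sem{C}$ (which were chosen to mirror the selection-function clauses) together with naturality of predicate liftings; when the carrier set is left unchanged the extensions $\Sem{\rho}$ of subformulas coincide on the nose, so (ii) is routine.

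For step (i), the direction from selection function models to $\AgFunct$-coalgebras is pure bookkeeping: (E1$'$) is (E1); (E3a$'$) is the instance $A=\emptyset$ of (E3) (which forces $f(x)(\emptyset)=\emptyset$); (E3b$'$), namely ``$\xi(x)(A)=\top\Rightarrow x\in A$'', is just (E3); and the relevant instance of (E2$'$), ``$\xi(x)(A)=\top$ and $\xi(x)(B)=\top$ imply $\xi(x)(A\cap B)=\top$'', is the point-$x$ content of (E2), since then $x\in f(x)(A)\cap f(x)(B)\subseteq f(x)(A\cap B)$. Hence, given a selection function model with $w_0\models\phi$, the recoded $\xi$ makes $(X,\xi)$ an $\AgFunct$-coalgebra, and by step (ii) $w_0\models\phi$ over $\AgFunct$.

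The converse direction is where the real work lies, and I expect it to be the main obstacle. Recoding a value $\xi(x)(A)=*$ back into a set requires producing a \emph{nonempty} subset of $A\setminus\{x\}$, which may be impossible (e.g.\ if $A=\{x\}$), and, worse, the chosen witness sets must be globally coherent so that (E2) holds for \emph{all} pairs $A,B$, not merely for the finitely many sets named by subformulas of $\phi$. I would resolve both issues by first passing, along a suitable coalgebra morphism, from the given $\AgFunct$-coalgebra to one with ``enough room'': duplicate every state, and for every pair $(x,A)$ with $\xi(x)(A)=*$ adjoin a fresh point $w_{x,A}$ that is made a behavioural copy of some element of $A$ and placed inside the recoded image of $A$. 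Since predicate liftings are natural, coalgebra morphisms preserve and reflect satisfaction, so the enlarged coalgebra still satisfies $\phi$ at the image of $w_0$. On the enlarged carrier one then defines the selection function by $f(x)(A)=\{x\}$ when $\xi(x)(A)=\top$ (legal by (E3b$'$)), $f(x)(A)=\emptyset$ when $\xi(x)(A)=\bot$, $f(x)(A)=\{w_{x,A}\}$ when $\xi(x)(A)=*$, and $f(x)(D)=\emptyset$ for all sets $D$ not arising as a recoded image. Conditions (E1) and (E3) are then immediate, and (E2) reduces to trivialities: two distinct $*$-sets contribute distinct fresh witnesses, a $*$-set and a $\top$-set contribute disjoint singletons, and two $\top$-sets reduce exactly to the $\top$-instance of (E2$'$) already available. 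Finally, step (ii) yields $w_0\models\phi$ in the constructed selection function model, closing the equivalence.
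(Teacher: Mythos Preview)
Your forward direction matches the paper's: both recode $f(x)(A)$ to $\top$, $*$, or $\bot$ according as $x\in f(x)(A)$, $x\notin f(x)(A)\neq\emptyset$, or $f(x)(A)=\emptyset$. One point of care: you call the $\top,\top$ instance of (E2$'$) ``the relevant instance'', but (E2$'$) reads $\xi(x)(A)\meet\xi(x)(B)\le\xi(x)(A\cap B)$ in the order $\bot<*<\top$, so the instances with meet equal to $*$ are also nontrivial and do not follow from (E2) alone --- e.g.\ $f(x)(\{y\})=\{y\}$ and $f(x)(\{z\})=\{z\}$ with $y,z\neq x$ give both values $*$ while $\xi(x)(\emptyset)=\bot$. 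The paper, for its part, simply asserts this verification is ``clear''.

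Your converse direction is genuinely different from the paper's and considerably more involved. The paper avoids fresh witness points altogether: it takes the coproduct of the $\AgFunct$-coalgebra with itself (so that no definable set $\Sem{\rho}$ is a singleton) and then defines, uniformly on \emph{all} subsets $A$,
\[
  \bar f(x)(A)=\begin{cases}A&\text{if }\xi(x)(A)=\top,\\A\setminus\{x\}&\text{if }\xi(x)(A)=*,\\\emptyset&\text{if }\xi(x)(A)=\bot.\end{cases}
\]
Conditions (E1)--(E3) then follow directly from (E1$'$)--(E3b$'$), and the formula induction runs on the same carrier; the only delicate step is the clause for $C$, where one needs $\Sem{\rho}\setminus\{x\}\neq\emptyset$ whenever $\xi(x)(\Sem{\rho})=*$, which is exactly what the doubling buys. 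Your singleton-based construction could presumably be completed, but as written it leaves open how the selection function is defined at the fresh states $w_{y,A}$ themselves, what the separate ``duplicate every state'' step contributes once fresh witnesses are present anyway, and how the formula induction negotiates the split between ``recoded images'' and arbitrary subsets of the enlarged carrier. The paper's route sidesteps all of this machinery.
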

\begin{proof}
    \emph{`Only if:'} Given a selection function model $(X,f)$, define an
  $\ThreeNb$-coalgebra $(X,\tilde f)$ by 
  \begin{equation*}
    \tilde f(x)(A)= \begin{cases}\top &\textrm{if $x\in f(x)(A)$}\\
      * & \textrm{if $x\notin f(x)(A)\neq\emptyset$}\\
      \bot & \textrm{if $f(x)(A)=\emptyset$}.
      \end{cases}
  \end{equation*}
  It is clear that $(X,\tilde f)$ is an $\AgFunct$-coalgebra and that
  $x\in X$ satisfies the same formulas in $(X,\tilde f)$ as in $(X,f)$.

  \emph{'If':} Let $(X,f)$ be an $\AgFunct$-coalgebra. We can assume
    that $|\Sem{\phi}_{(X,f)}|\neq 1$ for all formulas $\phi$
    (otherwise, form the coproduct of $(X,f)$ with itself, so that each
    state has a twin satisfying the same formulas). We define a
    selection function model $(X,\bar f)$ by
    \begin{equation*}
      \bar f(x)(A)=\begin{cases}
      A & \textrm{if $f(x)(A)=\top$}\\
      A-\{x\} & \textrm{if $f(x)(A)=*$}\\
      \emptyset & \textrm{if $f(x)(A)=\bot$}.
      \end{cases}
    \end{equation*}
    It is clear that $(X,\bar f)$ satisfies \textrm{E1}--\textrm{E3}.  
    One shows by induction over the formula structure that
    $x\in X$ satsifies the same formulas in $(X,\bar f)$ as in $(X,f)$,
    with the only non-trivial point being that in the step for the modal
    operator $C$, one has to note that, by the above assumption,
    $\Sem{\phi}_{(X,f)}-\{x\}\neq\emptyset$ whenever
    $f(x)(\Sem{\phi}_{(X,f)})=*$.  
\end{proof}
\noindent
To avoid exponential explosion, we represent elements of $\ThreeNb(X)$,
for $X$ finite, using partial maps $f_0:\contrapow(X)\parr 3$.
To enforce~(E2$'$), we let such an $f_0$ represent the map
$f:\contrapow(X)\to 3$ that maps $B\subseteq X$ to the maximum of
\begin{math}
\Meet_{i=1}^n f_0(A_i),
\end{math}
taken over all sets $A_1,\dots,A_n\subseteq X$ such that $\bigcap A_i=B$
and $f_0(A_i)$ is defined for all~$i$; when no such sets exist, the
maximum is understood to be $\bot$.
\begin{lem}\label{lem:elgesem-represent}
  Let $f_0$ and $f$ be as above.
  \begin{sparenumerate}
    \item\label{item:lower} Whenever $f_0(A)$ is defined, then
      $f_0(A)\le f(A)$.
    \item\label{item:upper} Let $b\in 3$. Then $f(A)\ge b$ iff
      \begin{math}
	\bigcap\{B\subseteq X\mid A\subseteq B,f_0(B)\ge b\textrm{
	  defined}\} = A.
      \end{math}
    \item\label{item:E1} The pair $(f,x)$ satisfies (E1$'$)
      iff $f_0(X)$ is either undefined or equals $\bot$.
    \item \label{item:E2} The pair $(f,x)$ satisfies (E2$'$).
    \item \label{item:E3a} The pair $(f,x)$ satisfies (E3a$'$) iff 
      \begin{math}
	\bigcap\{A\subseteq X\mid f_0(A)>\bot\textrm{
	defined}\}\neq\emptyset.
      \end{math}
    \item\label{item:E3b} The pair $(f,x)$ satisfies (E3b$'$)
      iff whenever $f_0(A)=\top$ is defined, then $x\in A$.
  \end{sparenumerate}
\end{lem}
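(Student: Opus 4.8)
The plan is to prove the six items in order, with~\ref{item:lower} and~\ref{item:upper} as the technical backbone and~\ref{item:E3a}, \ref{item:E3b} read off~\ref{item:upper} via short direct arguments. Items~\ref{item:lower} and~\ref{item:E1} I would dispatch immediately from the definition of~$f$: for~\ref{item:lower}, the one-element family $(A)$ is admissible in the maximum defining $f(A)$, so $f(A)\ge f_0(A)$ whenever $f_0(A)$ is defined; for~\ref{item:E1}, the only subsets of $X$ whose intersection equals $X$ are equal to $X$, so the sole admissible family for $f(X)$ is $(X)$, which moreover requires $f_0(X)$ to be defined, whence $f(X)=f_0(X)$ if the latter is defined and $f(X)=\bot$ otherwise --- exactly (E1$'$).

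The heart of the matter is~\ref{item:upper}. Fix $b>\bot$ and write $\mathcal{S}_b=\{B\subseteq X\mid f_0(B)\ge b\text{ defined}\}$ and $\mathcal{F}=\{B\in\mathcal{S}_b\mid A\subseteq B\}$. By definition of~$f$, $f(A)\ge b$ holds iff there is a finite nonempty family $A_1,\dots,A_n$ drawn from $\mathcal{S}_b$ with $\bigcap_iA_i=A$. For the direction ``$\Rightarrow$'', each such $A_i$ contains $A$ and hence lies in $\mathcal{F}$, so $A=\bigcap_iA_i\supseteq\bigcap\mathcal{F}\supseteq A$, i.e.\ $\bigcap\mathcal{F}=A$. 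For ``$\Leftarrow$'', if $\bigcap\mathcal{F}=A$ then, as $X$ is finite, $\mathcal{F}$ is itself a finite admissible family, so $f(A)\ge\Meet_{B\in\mathcal{F}}f_0(B)\ge b$. The one subtle point is the case $\mathcal{F}=\emptyset$: under the convention $\bigcap\emptyset=X$ this forces $A=X$, and I would dispose of this corner (together with the trivial case $b=\bot$) by direct inspection, falling back on~\ref{item:E1}.

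With~\ref{item:upper} available, items~\ref{item:E3a} and~\ref{item:E3b} are quick. For~\ref{item:E3a}: (E3a$'$) says $f(\emptyset)=\bot$, equivalently $f(\emptyset)\ge *$ fails; instantiating~\ref{item:upper} at $A=\emptyset$ and $b=*$ rewrites $f(\emptyset)\ge *$ as $\bigcap\{B\subseteq X\mid f_0(B)>\bot\text{ defined}\}=\emptyset$, so $f(\emptyset)=\bot$ iff this intersection is nonempty --- precisely the stated condition. For~\ref{item:E3b}: if $f_0(A)=\top$ is defined then $f(A)=\top$ by~\ref{item:lower}, so (E3b$'$) forces $x\in A$; conversely, if $f(A)=\top$ then some admissible family $A_1,\dots,A_n$ has all $f_0(A_i)=\top$, and if each $A_i$ contains $x$ then so does $A=\bigcap_iA_i$.

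Finally,~\ref{item:E2} --- that $(f,x)$ always satisfies (E2$'$) --- I would prove directly by amalgamating witnessing families: put $c=f(A)\meet f(B)$; if $c=\bot$ there is nothing to show, otherwise pick admissible families $A_1,\dots,A_n$ for $A$ and $B_1,\dots,B_m$ for $B$ attaining the maxima $f(A)$ and $f(B)$ (so $f_0(A_i),f_0(B_j)\ge c$ for all $i,j$); then $A_1,\dots,A_n,B_1,\dots,B_m$ is an admissible family with $(\bigcap_iA_i)\cap(\bigcap_jB_j)=A\cap B$, whence $f(A\cap B)\ge c$. The main obstacle in the whole argument is the bookkeeping for~\ref{item:upper} --- making the degenerate behaviour of the right-hand intersection (empty witness family, $A=X$) match the left-hand inequality; once~\ref{item:lower} and~\ref{item:E1} are settled, everything else is routine.
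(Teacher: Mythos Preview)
Your proof is correct and follows essentially the same approach as the paper's, only with more detail: the paper dismisses the ``if'' direction of item~\ref{item:upper} as trivial and item~\ref{item:E2} as holding ``by construction,'' whereas you spell out both arguments explicitly (using finiteness of~$X$ for the former and amalgamation of witnessing families for the latter) and additionally flag the edge cases $\mathcal{F}=\emptyset$ and $b=\bot$ that the paper passes over in silence.
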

\begin{proof}
  
\noindent\emph{\ref{item:lower}.:} Trivial.

 \noindent\emph{\ref{item:upper}.:} `If' is trivial. `Only if': by assumption,
   $A=\bigcap_{i=1}^n B_i$ for some $B_i$ such that $f_0(B_i)\ge b$ is
   defined for all $i$; the claim follows immediately.
  
\noindent\emph{\ref{item:E1}.:}  `Only if' is immediate by~\ref{item:lower}.,
    and `if' holds because $X=\bigcap A_i$ implies $A_i=X$ for all $i$.

 \noindent\emph{\ref{item:E2}.:} By construction.

\noindent\emph{\ref{item:E3a}.:} Immediate by~\ref{item:upper}.
  
\noindent\emph{\ref{item:E3b}.:} `Only if' holds by 1., and `if' holds because
  $f(B)=\top$ implies that $B=\bigcap A_i$ for sets $A_i$ such that
  $f_0(A_i)=\top$ for all $i$.
\end{proof}
\noindent
By Lemma~\ref{lem:elgesem-represent}, it is immediate that one-step
model checking is in $P$.  To prove the OSPMP, let
$(X,\tau,f:\contrapow(X)\to 3,x)$ be a one-step model of a one-step pair
$(\phi,\psi)$ over $V$. By Remark~\ref{rem:fmp}, we can assume that $X$
is finite. Let the set $Y\subseteq X$ consist of
\begin{sparitemize}
  \item the element $x$;
  \item an element $y_{ab}\in\tau(a)\setminus\tau(b)$ for each pair
    $(a,b)\in V^2$ such that $\tau(a)\not\subseteq\tau(b)$;
  \item an element 
    \begin{math}
      z_a\in\bigcap\,\{\tau(b)\mid b\in V,\tau(a)\subseteq \tau(b),
      \end{math} \begin{math}
      f(\tau(b))>f(\tau(a))\}\setminus \tau(a)
    \end{math}
    for each $a\in V$ ($z_a$ exists by (E2$'$)); and
  \item an element $w_0\in\bigcap\{\tau(b)\mid f(\tau(b))>\bot\}$ ($w_0$
    exists by (E2$'$) and (E3a$'$)).
\end{sparitemize}
Put $\tau_Y(a)=\tau(a)\cap Y$ for $a\in V$, and let $f_Y$ be represented
by the partial map $f_0$ taking $\tau_Y(a)$ to $f(\tau(a))$ ($f_0$
is well-defined by construction of $Y$). Then $Y$ and $(f_Y,x)$ are of
polynomial size in $\psi$, and $(Y,\tau_Y)\models\phi$. By
Lemma~\ref{lem:elgesem-represent}, $(f_Y,x)$ is in $\AgFunct(X)$, with
the criterion for (E3a$'$) satisfied due to $w_0\in Y$. By
Lemma~\ref{lem:elgesem-represent}.\ref{item:upper}, the $z_a\in Y$
ensure that $f_Y(\tau_Y(a))=f(\tau(a))$ for all $a\in V$, so that
$f_Y\modelsOS_{(Y,\tau_Y)}\psi$.

By Corollary~\ref{cor:complexity}, we obtain that \emph{the modal
  logic of agency is in \PSPACE{}}, and that bounding the modal
nesting depth brings the complexity down to \NP. Both results (and
even decidability) seem to be new. In the light of the previous
observation that the agglomeration axiom $Ea\land Eb\modimpl E(a\land
b)$ tends to cause \PSPACE-hardness~\cite{Vardi89}, we conjecture that
the \PSPACE upper bound is tight.

\section{Exponential Branching}

\noindent In cases where the OSPMP fails, it may still be possible to
obtain a \PSPACE upper bound by traversing an exponentially
branching shallow model (by Remark~\ref{rem:fmp}, branching is never
worse than exponential). The crucial prerequisite is that
exponential-size one-step models can be traversed pointwise,
accumulating during the traversal a polynomial amount of information
that suffices for one-step model checking. This requires additional
assumptions on the signature functor $S_0$:
\begin{defn}
  We say that $S_0$ is \emph{pointwise bounded} w.r.t.\ a set $C$ if
  for all sets $X$, there exists an injection $S_0X\into (X\to C)$
  (i.e.\ $|S_0X|\le|X\to C|$). We then identify $S_0X$ with a subset
  of $X\to C$.
\end{defn}
\noindent (Note that the above does \emph{not} require that $\lambda
X.\,X\to C$ is functorial.)  Recall from~\cite{Schroder05} that the
signature functor $S_0$ admits a \emph{separating} set of unary
predicate liftings (separation is a necessary condition for the
generalised Hennessy-Milner property) iff the family of maps
$S_0f:S_0X \to S_02$, indexed over all maps $f:X\to 2=\{\bot,\top\}$,
is jointly injective for each set $X$. Typically, functors $S_0$
satisfying this condition satisfy the stronger requirement that
already the family of maps
\begin{math}
  (S_0\cf_{\{x\}}:S_0X\to S_02)_{x\in X}
\end{math}
is jointly injective, where $\cf_A$ denotes the characteristic
function of $A\subseteq X$, so that $S_0$ is pointwise bounded w.r.t.\
$S_02$; often, even a quotient of $S_02$ will suffice. Of the signature
functors mentioned in Example~\ref{expl:logics}, $\powerset$ and
$\PDist$ are pointwise bounded (w.r.t.\ $2$ and $[0,1]$,
respectively), while $\CF$ and $\powerset\circ\PDist$ fail to be
so. Further examples of pointwise bounded functors include the game
frame functor appearing in the semantics of coalition
logic~\cite{SchroderPattinson06} and the multiset functor introduced
below.

Assume from now on that $S_0$ is pointwise bounded w.r.t.\ $C$, with a
given representation of elements of $C$ (Convention~\ref{conv:rep} is
no longer needed). For $t:X\to C$, we define
\begin{equation*}
  \maxsize(t)=\max_{x\in X}\size(t(x)),
\end{equation*}
and put $\maxsize(t,x)=\maxsize(t)$ for $x\in X$.
\begin{defn}\label{def:wk-ospmp}
  We say that $\Struct$ has the \emph{one-step pointwise polysize
    model property (OSPPMP)} if there exists a polynomial $p$ such
  that, whenever a one-step pair $(\phi,\psi)$ over $V$ has a one-step
  model $(X,\tau,t,x)$, then it has a one-step model $(Y,\kappa,s,y)$
  such that $|Y|\le 2^{|V|}$, $\maxsize(s)\le p(|\psi|)$, and
  $y\in\kappa(a)$ iff $x\in\tau(a)$ for all $a\in V$; such a model is
  called \emph{pointwise polysize}.
\end{defn}\noindent
By Remark~\ref{rem:fmp}, the actual content of the OSPPMP is the
polynomial bound on $\maxsize(s)$. The OSPPMP holds for all
pointwise bounded functors mentioned so far, trivially so in cases
where $C$ is finite. We have a variant of
Theorem~\ref{thm:shallowmodels}, proved entirely analogously, which
states that \emph{under the OSPPMP, every satisfiable formula $\psi$ is
satisfied in a shallow model $(X,\xi)$ with branching degree at most
$2^{|\psi|}$ and $\maxsize(\xi(x))$ polynomially bounded in $|\psi|$}.
For the ensuing algorithmic treatment, we need a refined notion of
one-step model checking:
\begin{defn}\label{def:posmc}
  The \emph{pointwise one-step model checking problem} is to check,
  given a map $t:X\to C$, $x\in X$, a $\powerset(X)$-valuation $\tau$
  for $V$, $Y\subseteq X$, and a conjunctive clause $\psi$ over
  $\Lambda(V)$, whether $(t,x)\in SY\subseteq (X\to C)\times X$ and
  $t\modelsOS_{Y,\tau_Y}\psi$, where $\tau_Y(a)=\tau(a)\cap Y$ for
  $a\in V$. We say that this problem is \emph{\PSPACE-tractable} if it
  is decidable on a non-deterministic Turing machine with input tape
  that uses space polynomial in $\maxsize(t)$ and accesses each input
  symbol at most once.
\end{defn}
\begin{thm}\label{thm:alt-pspace}
  If $\Struct$ has the OSPPMP and pointwise one-step model checking
  is \PSPACE-tractable, then the satisfiability problem of
  $\FLang(\Lambda)$ is in \PSPACE.
\end{thm}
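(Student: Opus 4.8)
The plan is to run the natural nondeterministic variant of Algorithm~\ref{alg:pspace}. Recall the OSPPMP-counterpart of Theorem~\ref{thm:shallowmodels} noted above: a satisfiable $\psi$ of rank $n+1$ sits at the root $x$ of a shallow $S$-coalgebra whose root transition $t=\xi(x)$ yields, together with a suitable valuation $\tau$ on the at most $2^{|V|}$ immediate successors $Y$, a one-step model of some $\bar\psi\in\Prop(\Lambda(V))$ that propositionally entails $\psi$, where $|V|\le|\psi|$ and $\maxsize(t)$ is polynomial in $|\psi|$; moreover each $\Th_\tau(y)\sigma$, $y\in Y$, is again satisfiable, of strictly smaller rank, and built only from subformulas of $\psi$. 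The difference from Algorithm~\ref{alg:pspace} is that $Y$ now has up to exponentially many elements, so we cannot store $Y$, $\tau$ or $t$ explicitly; instead we traverse the one-step model state by state. Correctness of the resulting algorithm follows from this theorem-variant exactly as Algorithm~\ref{alg:pspace} follows from Theorem~\ref{thm:shallowmodels}, so the only genuinely new point is the space analysis of the state-by-state traversal, and this is precisely where the \PSPACE-tractability hypothesis on pointwise one-step model checking (Definition~\ref{def:posmc}) comes in.

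Concretely, recursively on $\rank(\psi)$: if $\rank(\psi)=0$, accept iff $\psi$ evaluates to $\top$; otherwise form $V$ and $\sigma$ and guess $\bar\psi$ as in Step~\ref{step:osformula} of Algorithm~\ref{alg:pspace}. To certify, without ever storing $Y$, $\tau$ or $t$, that a pointwise polysize one-step model $(Y,\tau,t,x)$ of $\bar\psi$ with $\Th_\tau(y)\sigma$ satisfiable for all $y\in Y$ exists, fix $Y=\{1,\dots,N\}$ with $N\le 2^{|V|}$, guess $x\in Y$, and run a simulation of the nondeterministic, read-once machine $M$ witnessing \PSPACE-tractability of pointwise one-step model checking for the instance $(t,x,\tau,Y,\bar\psi)$, whose input is presented as a short prefix ($\bar\psi$, the index $x$, and $N$) followed, for $y=1,\dots,N$, by the local datum $\bigl(t(y),(\cf_{\tau(a)}(y))_{a\in V}\bigr)$. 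Whenever $M$ is about to consume the datum for the next state $y$, we first guess $t(y)\in C$ with $\size(t(y))$ bounded by the OSPPMP polynomial and guess the truth values $(\cf_{\tau(a)}(y))_{a\in V}$ (which determine $\Th_\tau(y)$), hand this datum to $M$, then recurse to check satisfiability of $\Th_\tau(y)\sigma$ (rejecting if it fails), and finally discard the datum. Because $M$ reads each input symbol at most once, this lazy generation is faithful; after $M$ has processed the entire stream, accept iff $M$ accepts.

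For the space bound, observe that at each recursion level the workspace consists of: a state counter ($O(|\psi|)$ bits, since $\log N\le|V|\le|\psi|$); the guessed clause $\bar\psi$; the local datum of the state currently being processed ($O(\size(t(y))+|V|)$, polynomial in $|\psi|$ by the OSPPMP); and the configuration of the simulated machine $M$, which by \PSPACE-tractability uses space polynomial in $\maxsize(t)$, hence in $|\psi|$. The rank strictly decreases along recursive calls, so the recursion depth is at most $\rank(\psi)\le|\psi|$; moreover every formula arising in a recursive call is a boolean combination of subformulas of $\psi$, so representing subformulas by pointers into $\psi$ keeps all formulas (and hence $V$, $\sigma$, $\bar\psi$) of size polynomial in $|\psi|$ throughout. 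Multiplying the per-level polynomial workspace by the recursion depth gives a polynomial total, so the procedure runs in nondeterministic polynomial space; since $\mathrm{NPSPACE}=\mathrm{PSPACE}$ (Savitch), $\FLang(\Lambda)$-satisfiability is in \PSPACE.

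I expect the crux to be the interleaving in the second paragraph: because the one-step model can branch exponentially, no more than one successor's worth of data may be kept in memory at a time, which forces the successor structure to be supplied to the model checker as a stream while the recursive satisfiability checks on individual successors are interleaved with the generation of that stream. This is exactly the scenario Definition~\ref{def:posmc} is designed for --- a read-once model checker running in space polynomial in $\maxsize$ can be driven by a lazily guessed input --- and everything else (the polynomial recursion depth, the pointer representation that prevents formula blow-up across recursive calls) is routine and parallels the analysis underlying Algorithm~\ref{alg:pspace}.
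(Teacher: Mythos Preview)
Your proof is correct and takes essentially the same approach as the paper: simulate the read-once pointwise model checker while lazily generating its exponential-size input one successor at a time, interleaving the recursive satisfiability checks with this generation so that only one successor's datum and the polynomial-size configuration of $M$ are held at each recursion level. The paper organises the bookkeeping slightly differently---it fixes $X=2^V$ with the canonical valuation, takes $Y\subseteq X$ to be the set of \emph{satisfiable} valuations (so the recursive calls become part of computing the input to $M$ rather than a separate side-check), and phrases the lazy generation as on-demand production of the $k$-th input symbol rather than a sequential stream---but the idea and the space analysis are the same.
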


\begin{proof}
  Let $M$ be a decision procedure for pointwise one-step
  model checking as required in the definition of \PSPACE-tractability
  (Defn.~\ref{def:posmc}). Let $p$ be a polynomial witnessing the
  OSPPMP as in Definition~\ref{def:wk-ospmp}. Then the following
  non-deterministic algorithm decides satisfiability of
  $\FLang(\Lambda)$-formulas:
\begin{algorithm}\label{alg:alt-pspace}
\begin{algenumerate}
\item If $\rank(\psi)=0$, terminate successfully if $\psi$ evaluates to
$\top$, else unsuccessfully. Otherwise:
\item  Take $V$ and $\sigma$ as in the proof of
  Theorem~\ref{thm:shallowmodels},
  and guess a conjunctive clause
  $\bar\psi$ over $\Lambda(V)$ containing for each subformula
  $L(\rho_1,\dots,\rho_n)$ of $\psi$ either
  $L(a_{\rho_1},\dots,a_{\rho_n})$ or $\neg
  L(a_{\rho_1},\dots,a_{\rho_n})$ such that $\bar\psi\sigma$
  propositionally entails $\psi$.
\item \label{step:posmc} Call $M$ with arguments $X,\tau,Y,t$ as in
  Definition~\ref{def:posmc} to check that $t\in TY$ and
  $t\modelsOS_{(Y,\tau_Y)}\bar\psi$, with $\tau_Y$ as in
  Definition~\ref{def:posmc}. Here, $X=2^V$, $\tau(a)=\{B\in X\mid
  a\in B\}$, $$Y=\{B\in X\mid \Land_{a_{\rho}\in
    B}\rho\land\Land_{a_{\rho}\notin B}\neg\rho\,\textrm{
    satisfiable}\}$$ is calculated recursively, and $t\in (X\to C)$
  with $\maxsize(t)\le p(|\phi|)$ is guessed.
\end{algenumerate}
\end{algorithm}
\noindent It remains to see that the above algorithm can be
implemented in polynomial space although the input to $M$ in
Step~\ref{step:posmc} is of overall exponential size. This is achieved
by replacing read operations on the input tape in $M$ by calls to a
procedure passed by the caller, which produces the \mbox{$k$-th} input
symbol on demand, and then calling the modified pointwise model
checker $M'$ with such a procedure instead of the full argument. By
the assumption that $M$ accesses each symbol on the input tape at most
once, there is no need to keep the symbols representing the guessed
value~$t$ in memory after they have been passed to $M'$. Therefore,
only polynomial space overhead is generated by the input to $M'$ (the
input procedure depends on $\phi$ and hence has representation size
$O(|\phi|)$); the space usage of $M'$ itself is polynomial in
$\maxsize(t)$ and therefore in $|\phi|$. 
\end{proof}
\begin{expl}
  Theorem~\ref{thm:alt-pspace} applies e.g.\ to the modal logics $K$
  and $T$, as well as to probabilistic modal logic; however, as all
  these logics in fact enjoy the OSPMP, the method of
  Sect.~\ref{sec:ospmp} is preferable in these cases. A more
  interesting application is given by graded modal
  logic~\cite{Fine72}, or more generally Presburger modal
  logic~\cite{DemriLugiez06}. 

  In its single-agent version, Presburger modal logic has $n$-ary
  modal operators $\sum_{i=1}^na_i\#(\argument)\sim b$, where $b$ and
  the $a_i$ are integers and $\sim\,\in\{<,>,=\}\cup\{\equiv_k\mid
  k\in\Nat\}$. A coalgebraic semantics for this logic, equivalent for
  purposes of satisfiability to the ordered tree semantics given
  in~\cite{DemriLugiez06}, is defined over the finite multiset functor
  $\Bag$, which maps a set~$X$ to the set of maps $B:X\to\Nat$ with
  finite support, the intuition being that~$B$ is a multiset
  containing $x\in X$ with multiplicity~$B(x)$.   For $A\subseteq X$,
  put $B(A)=\sum_{x\in A}B(x)$. $\Bag$-coalgebras are graphs with
  $\Nat$-weighted edges. The above modalities are interpreted by
  \begin{equation*}\textstyle
    \Sem{\sum_{i=1}^na_i\#(\argument)\sim b}_X(A_1,\dots,A_n)=
    \{B\in\Bag(X)\mid \textstyle\sum_{i=1}^na_iB(A_i)\sim b\},
  \end{equation*}
  with $>,<,=$ interpreted as expected, and $\equiv_k$ as equality
  modulo~$k$. This logic extends graded modal logic, whose operators
  $\gldiamond{k}$ now become $\#(\argument)>k$.

  Of course, $\Bag$ is pointwise bounded w.r.t.\ $\Nat$. It follows
  easily from estimates on solution sizes of integer linear
  equalities~\cite{Papadimitriou81} that Presburger modal logic has
  the OSPPMP~\cite{DemriLugiez06}. Moreover, given a conjunctive
  clause $\psi$ over $\Lambda(V)$, a $\powerset(X)$-valuation $\tau$,
  and $B\in\Bag(X)$, one can check whether $B\modelsOS_{X,\tau}\psi$
  by traversing $X$ and computing the $B(\tau(a))$ by successive
  summation; it is thus easy to see that pointwise one-step model
  checking is \PSPACE-tractable. It follows that \emph{Presburger
    modal logic is in \PSPACE{}}. While this is proved already
  in~\cite{DemriLugiez06}, using essentially the same type of
  algorithm\footnote{\noindent The claim that a (rank-1) logic further
    extended by regularity constraints is still in \PSPACE is
    retracted in the full version of~\cite{DemriLugiez06} as being
    based on possibly erroneous third-party results.}, our method
  extends straightforwardly to extensions of Presburger modal logic by
  certain frame conditions such as reflexivity (modelled by the
  copointed functor $SX=\{(B,x)\in\Bag X\times X\mid B(x)>0\}$) or
  e.g.\ the condition that at least half of all transitions from a
  given state are loops (modelled by the copointed functor
  $SX=\{(B,x)\in\Bag X\times X\mid B(x)\ge B(X-\{x\}\}$). In
  particular, this implies that \emph{graded modal logic over
    reflexive frames} (i.e.\ the logic $Tn$ of~\cite{Fine72}) \emph{is
    in \PSPACE{}}, to our knowledge a new result. The logic $Tn$ can
  be seen as a description logic with qualified number restrictions on
  a single reflexive role. Our arguments extend straightforwardly to
  show that a description logic with role hierarchies, reflexive
  roles, and qualified number restrictions has concept satisfiability
  over the empty $T$-box in \PSPACE. As reflexivity of a role
  $R$ is expressed by the role inclusion $\id(\top)\subseteq R$,
  where $\id(\top)$ denotes the identity role, this logic is a
  fragment of $\mathcal{ALCHQ}(\id)$~\cite{BaaderEA03}.
\end{expl}

\section{Conclusion}

\noindent We have formulated two local semantic conditions that
guarantee \PSPACE upper bounds for the satisfiability problem of modal
logics in a coalgebraic framework: the OSPMP (one-step polysize model
property) and its pointwise variant, the OSPPMP, which is weaker but
relies on additional assumptions on the coalgebraic semantics. Both
conditions allow a direct construction of shallow models and their
traversal in polynomial space. This complements earlier
work~\cite{SchroderPattinson06} where syntactic criteria have been
used --- in particular, both the OSPMP and the OSPPMP can be applied
even when no complete axiomatisation of the logic at hand is
known. Several instantiations of our results to logics studied in the
literature witness both their generality and their usefulness: Apart
from re-proving known \PSPACE upper bounds for the normal modal logics
$K$ and $T$ as well as for the conditional logics $\CK$, \CKId, and
\CKmp, we have
\begin{sparitemize}
\item given a systematic account of tight \PSPACE upper bounds in
  modal logics of quantitative uncertainty that establishes new
  complexity bounds in some cases;
\item obtained a new \PSPACE upper bound for Elgesem's modal logic of
  agency and for graded (even Presburger~\cite{DemriLugiez06}) modal
  logic over reflexive frames~\cite{Fine72}, and more generally for an
  extension of the description logic $\mathcal{ALCHQ}$ with reflexive
  roles~\cite{BaaderEA03};
\item established (to our knowledge: new) tight $\mi{NP}$ upper bounds for
  bounded-rank fragments of the conditional logics $\CK$, \CKId, and
  \CKmp.
\end{sparitemize}
\noindent Ongoing work focusses on the extension of our results to
\emph{iterative} modal logics, defined by frame conditions of higher
rank, which however --- in particular outside the realm of Kripke
semantics --- exhibit a tendency towards higher complexity or even
undecidability (indeed, it seems to be the case that all known
iterative \PSPACE-complete modal logics are normal). 

\bibliographystyle{abbrv} \bibliography{coalgml}
  
\end{document}